\documentclass[lettersize,journal]{IEEEtran}
\usepackage{amsmath}
\usepackage{tikz}
\usepackage{nicematrix}
\usepackage{algorithm}
\usepackage{algpseudocode}
\usepackage{graphicx}
\usepackage{subcaption}
\usepackage{amsthm}
\usepackage{float}
\usepackage{mathrsfs}
\usepackage{verbatim}
\usepackage{epstopdf}
\usepackage{amssymb}
\usepackage{amsfonts}
\usepackage{color}
\usepackage{cite}
\usepackage{cancel}
\usepackage[shortlabels]{enumitem}
\usepackage[breaklinks=true,letterpaper=true,colorlinks=false,bookmarks=false]{hyperref}
\usepackage{algorithm}
\usepackage{algpseudocode}
\usepackage{arydshln}
\usepackage[official]{eurosym}
\usepackage{comment}

\usepackage{amsmath,amssymb,amsthm,mathrsfs,amsfonts,dsfont}
\usepackage[shortlabels]{enumitem}

\DeclareMathOperator*{\argmax}{arg\,max}

\newtheorem{theorem}{Theorem}
\newtheorem{proposition}{Proposition}

\newtheorem{corollary}{Corollary}[theorem]

\newtheorem{lemma}{Lemma}

\newtheorem{definition}{Definition}

\usepackage{tikz}
\usepackage{pgfplots}
\pgfplotsset{compat = newest}
\usetikzlibrary{shapes.misc, positioning}
\usetikzlibrary{pgfplots.external}
\usetikzlibrary{external}

\usepackage{xcolor}

\usepackage[colorinlistoftodos,prependcaption,textsize=tiny]{todonotes}
\pgfplotscreateplotcyclelist{nano mark style}{%
dashed, every mark/.append style={solid, fill=gray!30}, mark=*, line width=1.1pt\\%
dashed, every mark/.append style={solid, fill=gray!30}, mark=square*, line width=1.1pt\\%
dashed, every mark/.append style={solid, fill=gray!30}, mark=triangle*, mark size=2.5pt, line width=1.1pt\\%
dashed, every mark/.append style={solid, fill=gray!30},mark=diamond*, mark size=2.5pt, line width=1.1pt\\%
dashed, every mark/.append style={solid, fill=gray!30},mark=x, mark size=3pt, line width=1.1pt\\%
}

\pgfplotscreateplotcyclelist{nano line style}{%
color = black, line width=2pt\\%
densely dashed, color = black, line width=2pt\\%
densely dotted, color = black, line width=2pt\\%
loosely dashed, color = black, line width=2pt\\%
}
\begin{document}
\title{Differentially Private Consistent Release of Counting Queries}
\author{Borzoo Rassouli$^1$ and Morteza Varasteh$^2$\\
\small{$^1$ Nokia Bell Labs, Stuttgart, Germany}\\
\small{$^2$ University of Essex, Colchester, UK}\\
{\tt\small borzoo.rassouli@nokia-bell-labs.com}, {\tt\small  m.varasteh@essex.ac.uk}
}
\maketitle
\begin{abstract}
We study the problem of releasing counting-query outputs through a stochastic mechanism that is both consistent and \((\epsilon,\delta)\)-differentially private. Consistency requires the released value to lie within the feasible range of the query, while utility is measured by the worst-case probability of error. We first derive a closed-form expression for the minimum achievable error probability and obtain an explicit optimal mechanism. By exploiting the active differential privacy constraints satisfied by this mechanism, we then characterize the entire class of optimal mechanisms via a propagation argument, identifying the structural properties shared by all optimizers.

We next extend the framework to the setting in which the privacy mechanism is cascaded with an arbitrary fixed stochastic transformation representing a predetermined portion of the communication medium between the source and the destination. We first establish necessary and sufficient conditions under which this partial fixation of the medium incurs no loss in utility. We then derive upper and lower bounds on the optimal achievable performance based on convex mixing and spectral perturbation. Finally, we apply the theory to (M)-ary phase-shift keying (PSK) transmission over an additive white Gaussian noise (AWGN) channel and show that uncoded transmission is effectively optimal in the high-privacy regime.

\end{abstract}

\section{introduction}\label{intro}

The widespread deployment of data-driven services has intensified the need for mechanisms that simultaneously provide strong privacy guarantees and preserve utility. Differential privacy (DP) has emerged as a principled framework for this purpose, offering protection against inference attacks while enabling meaningful data analysis \cite{dwork2006calibrating,dwork2014algorithmic, DP2014}. Its promise lies in providing strong, quantifiable protection for individuals while still allowing meaningful statistical or algorithmic utility. Over the past decade, differential privacy has been adopted across a broad spectrum of domains, including machine learning~\cite{abadi2016deep,bu2020deep}, signal processing~\cite{geng2016optimal,he2014differential}, and communication systems and now plays a pivotal role in the design of privacy-preserving mechanisms.

A fundamental class of statistical releases consists of counting queries, whose output is an integer-valued statistic, such as the number of users possessing a given attribute~\cite{counting, counting2}. Although such aggregate statistics may appear to reveal little about individual records, numerous de-anonymization and reconstruction attacks have demonstrated that they can nevertheless leak sensitive information~\cite{arvind}.

On another note, in many applications, it is not sufficient to release a private estimate of the count; the released value should also be \emph{consistent}~\cite{Barak}, meaning that it lies within the feasible range of counts. Consistency preserves the semantic validity of the released statistic and eliminates the need for additional post-processing to project infeasible outputs back onto the admissible set.

There is an extensive body of literature on the differentially private release of queries. In \cite{mcsherry}, the authors introduce the exponential mechanism to satisfy pure differential privacy for general query functions. The authors in \cite{Ghosh2008} show that in the context of pure differential privacy, the geometric mechanism, i.e., adding noise with geometric distribution, is universally optimal for a broad class of loss functions in discrete settings. The authors in \cite{hall} show how adding Gaussian noise preserves approximate differential privacy for releasing infinite dimensional functions. Based on \cite{Ghosh2008}, the authors in \cite{Gupte} derive the optimal noise probability distributions for minimax (risk-averse) users in the context of pure differential privacy. In \cite{Geng0}, they generalize the results of \cite{Ghosh2008} and \cite{Gupte} to real-valued queries functions, and show that the optimal query-output independent perturbation mechanism is the staircase mechanism. The same authors in \cite{Geng2018} introduce truncated Laplacian mechanisms and prove their superiority over Gaussian noise when one seeks tight bounds on amplitude or variance.

This paper studies the optimal consistent release of counting queries from a source/sender to a destination/recipient  under approximate differential privacy, which is a generalization of pure differential privacy. The privacy mechanism is represented by a stochastic matrix that takes query outputs as its input and maps them to the released counts. Utility is measured by the worst-case probability of error over all input values. The design objective is to find a mechanism that minimizes this worst-case error subject to $(\epsilon,\delta)$-differential privacy and consistency constraints. 

We consider two settings. In the first, we assume complete freedom to design the processing applied to the messages transmitted from the source to the destination, representing an idealized scenario. We provide a complete characterization of the minimum achievable error probability in closed form. We also derive a canonical optimal mechanism explicitly, which admits a simple interpretation as a mixture of the identity channel and a cyclic exponential kernel. Moreover, by exploiting the active differential privacy constraints satisfied by this canonical mechanism, we further characterize the entire class of optimal mechanisms through a propagation argument. In particular, we determine when the optimal mechanism is unique, when optimal mechanisms are full rank, and which structural properties are shared by all optimal solutions.

In the second setting, we extend the framework to practical scenarios in which  a portion of the medium from the sender to the recipient is predetermined. Consequently, we do not have complete freedom to arbitrarily design the processing applied to the messages from source to destination.
The overall channel is modeled as the product of a designable privacy mechanism and a predetermined stochastic transformation. We observe that the existence of a predetermined portion of the end-to-end communication chain, does not necessarily lead to a degradation in utility, namely the worst-case error probability. In this light, we provide sufficient and necessary conditions under which the same performance as the ideal case, i.e., the first setting, can still be achieved, particularly in the high-privacy regime.
Afterwards, using convex mixing and spectral perturbation, we derive computable upper and lower bounds on the optimal error probability in general. 


\textbf{Notation.} 
Matrices and vectors are denoted by bold capital letters (e.g., \(\mathbf{A}, \mathbf{Q}\)) and bold lower case letters (e.g., \(\mathbf{b}, \mathbf{z}\)), respectively. Unless explicitly stated otherwise, all vectors considered in this paper are column vectors. 
Sets are denoted by calligraphic capital letters (e.g., \(\mathcal{X}, \mathcal{G}\)), with the exception of the set of real numbers, denoted by \(\mathds{R}\). The cardinality of set \(\mathcal{P}\) is denoted by \(|\mathcal{P}|\).

For integers \(m \leq n\), we define the discrete interval \( [m:n] \triangleq \{m, m+1, \ldots, n\} \). For $x\in[0,1]$, $\Bar{x}\triangleq1-x$. The floor and ceiling operator are denoted by \(\lfloor\cdot\rfloor\) and \(\lceil\cdot\rceil\), respectively.

For a square matrix $\mathbf{A}_{M\times M}$, denote its minimum diagonal entry by $\gamma(\mathbf{A})$, i.e., $\gamma(\mathbf{A})\triangleq\min_{i\in[1:M]}a_{i,i}$, where $a_{i,j}$, or sometimes denoted by $[\mathbf{A}]_{i,j}$, denotes the entry of $\mathbf{A}$ on row $i$ and column $j$. It is also evident that \(\gamma(\cdot)\) is superadditive, i.e., \(\gamma(\mathbf{A}+\mathbf{B})\geq\gamma(\mathbf{A})+\gamma(\mathbf{B})\). The entrywise max norm, infinity norm, and the induced matrix $1$-norm of \(\mathbf{A}\) are denoted by \(\|\mathbf{A}\|_{\textnormal{max}}\triangleq\max_{i,j}|a_{i,j}|\), \(\|\mathbf{A}\|_{\infty}\triangleq\max_{i}\sum_j|a_{i,j}| \), \(\|\mathbf{A}\|_1\triangleq
\|\mathbf{A}^T\|_\infty\), respectively. The identity and all-one matrices are denoted by \(\mathbf{I}\) and \(\mathbf{J}\), respectively.


The positive cone generated by vectors \(\mathbf{t}_1,\ldots,\mathbf{t}_M\) is defined as  $\textnormal{cone}_+(\mathbf{t}_1,\ldots,\mathbf{t}_M)
    \triangleq
    \left\{
    \sum_{i=1}^M \lambda_i \mathbf{t}_i \;\middle|\; \lambda_i \geq 0
    \right\}$.
    
The interior of a set \(\mathcal{W}\) is denoted by \(\textnormal{int}(\mathcal{W})\). The indicator function \(\mathds{1}_{\{\textnormal{condition}\}}\) equals one when the condition holds, and zero otherwise. For a real number $x$, $(x)^+\triangleq\max\{x,0\}.$

Finally, consider a constraint of the form \(a\leq b\), where \(a\) and \(b\) are two parameters. If the constraint holds with equality, i.e., \(a=b\), then the constraint is said to be tight (or active).

\section{Preliminaries}
\subsection{Differential Privacy}

Two data sets $\mathcal{D}_1$ and $\mathcal{D}_2$ are neighbors (adjacent) if they differ in only one data entry (i.e., the data of an individual). Let $\epsilon \geq 0$ and $\delta\in[0,1)$. An algorithm/mechanism $A$, which is a random mapping in general, is said to be $(\epsilon, \delta)$-differentially private if
\begin{equation}
    \textnormal{Pr}\{{A}(\mathcal{D}_1)\in\mathcal{S}\} \leq e^\epsilon \textnormal{Pr}\{{A}(\mathcal{D}_2)\in\mathcal{S}\} + \delta,
\end{equation}
for all subsets $\mathcal{S}$ of the image of ${A}$ and all adjacent datasets $\mathcal{D}_1$ and $\mathcal{D}_2$. As a special case, when \(\delta=0\), the algorithm is said to satisfy \(\epsilon\)-differential privacy. This setting is commonly referred to as \emph{pure differential privacy}, whereas the case \(\delta>0\) is known as \emph{approximate differential privacy}.

\subsection{Counting query}
 Given a dataset \(\mathcal{D}\) consisting of records from some domain \(\mathcal{X}\), a counting query is specified by a predicate \(f:\mathcal{X}\to\{0,1\}\) that indicates whether a record satisfies a particular property. The query returns the number of records in \(D\) for which the predicate evaluates to one, namely,
\[
q_f(\mathcal{D})=\sum_{x\in \mathcal{D}} f(x).
\]
Equivalently, the query counts the number of individuals in the dataset satisfying a specified condition.

In the context of counting problems, the output of a query is an integer value, and since two neighbouring datasets will have counts that differ by at most one, the adjacency constraint in the definition of differential privacy maps to the adjacency of integer numbers. In other words, adjacent numbers must be made as indistinguishable as required depending on the choice of $\epsilon$ and $\delta$.

\section{Problem statement}\label{ideal}
For an arbitrary integer $M\geq 2$, let $\mathcal{M}\triangleq[1:M]$ denote the range of a counting query. This indexing is adopted solely for convenience; using $\{0,\ldots,M-1\}$ or any other set of $M$ consecutive integers is entirely equivalent, as it amounts only to a relabeling of the query outputs and does not affect the analysis.  

A desirable property of a privacy mechanism for counting queries is consistency. Throughout this work, we restrict our attention to mechanisms that produce consistent releases, i.e., whose output alphabet coincides with the input message set \(\mathcal{M}\).

By a \textit{mapping}\footnote{In this text, the terms mapping, mechanism, stochastic matrix, and (probability) transition matrix are used interchangeably.}, we refer to a generic $M$ by $M$ probability transition matrix $\mathbf{P}=[p_{i,j}]$, where $p_{i,j}$ denotes the probability that input $i$ is mapped to output $j$ ($i,j\in\mathcal{M}$).

The message set \(\mathcal{M}\) is to be transmitted from a sender (source) to a recipient (destination) through a privacy-preserving mapping. Accordingly, \(\mathcal{M}\) is referred to as the input message set, as it constitutes the set of messages supplied to the privacy mechanism.



Let $\mathcal{S}_{\epsilon, \delta}$ denote the set of all mappings $\mathbf{P}$ that are ($\epsilon, \delta$)-differentially private. More specifically,

\begin{align}
    \mathcal{S}_{\epsilon, \delta}\triangleq\bigg\{\mathbf{P}\ \bigg|\ \sum_{j\in \mathcal{U}} p_{k,j}\leq  e^\epsilon\sum_{j\in \mathcal{U}}p_{i,j} + \delta,&\ \  \forall~\mathcal{U} \ \subset\mathcal{M},\nonumber\\&\ \ \forall~ i,k: |k-i|=1,\bigg\}.\label{DEF}
\end{align}
It can be easily verified that $\mathcal{S}_{\epsilon, \delta}$ is a convex set. More specifically, for any $\lambda \in (0,1)$ and $\mathbf{P}, \mathbf{P}' \in \mathcal{S}_{\epsilon, \delta}$, we have $\lambda \mathbf{P} + (1-\lambda) \mathbf{P}'\in\mathcal{S}_{\epsilon, \delta}$.
 

For any input probability distribution, denoted by the probability mass vector $\mathbf{w} = (w_1,w_2,\ldots,w_M)$, the probability of error in an arbitrary  mapping $\mathbf{P}$ is given by
\begin{equation}    p_e(\mathbf{w},\mathbf{P})\triangleq 1-\sum_{i=1}^Mw_ip_{i,i}.
\end{equation}

From a transmission reliability perspective, it is desirable to minimize the probability of error, which depends on the input probability distribution $\mathbf{w}$. However, the design of a differentially private mapping is distribution-agnostic, i.e., it does not rely on the statistics of the input. Therefore, one approach is to minimize the worst-case error. In other words, we aim to find a differentially private mapping that minimizes the maximum probability of error across all possible input distributions. More formally,
\begin{align}
p_e^*&\triangleq\min_{\mathbf{P}\in\mathcal{S}_{\epsilon, \delta}}\max_{\mathbf{w}}\ p_e(\mathbf{w},\mathbf{P})\label{minmax}\\
&=1-\max_{\mathbf{P}\in\mathcal{S}_{\epsilon, \delta}}\gamma(\mathbf{P}),\label{maxmin}
\end{align}
which results from $\max_{\mathbf{w}} p_e(\mathbf{w}, \mathbf{P}) = 1 - \gamma(\mathbf{P})$, where $\gamma(\cdot)$ is defined in section \ref{intro}. It is worth noting that the optimization problem in (\ref{minmax}) could initially be formulated using the infimum and supremum in place of the minimum and maximum. However, since the objective function is continuous and the feasible set is compact, being a closed and bounded subset of a Euclidean space, the extrema are attained. Therefore, the use of the minimum and maximum in (\ref{minmax}) is justified.

The goal is to identify a $\mathbf{P}^* \in \mathcal{S}_{\epsilon, \delta}$ whose minimum diagonal entry is maximum. With this choice of $\mathbf{P}^*_{}$, it is ensured that $p_e(\mathbf{w}, \mathbf{P}^*) \leq p_e^*$ for any $\mathbf{w}$, i.e., the probability of error remains consistently below $p_e^*$ regardless of the input distribution.

The max-min problem in (\ref{maxmin}) can be written as
\begin{align}
\max_{\mathbf{P}\in\mathcal{S}_{\epsilon,\delta}} \gamma(\mathbf{P})
&= \max_{p_{i,j}} \ x \label{linprog} \\
\textnormal{s.t.}\quad
& \sum_{j\in\mathcal{U}} (p_{k,j}-e^\epsilon p_{i,j}) - \delta \le 0,
\ \forall\, |k-i|=1,\nonumber\\& \forall\, \mathcal{U}\subset\mathcal{M} \nonumber \\
& x - p_{i,i} \le 0,\quad i\in[1:M] \nonumber \\
& \sum_{j=1}^{M} p_{i,j} = 1,\quad i\in[1:M] \nonumber \\
& p_{i,j} \ge 0,\quad i,j\in[1:M] \nonumber
\end{align}

which is an optimization of a linear function with linear constraints, and hence, a linear program (LP). 

\section{Optimal solutions}



For an arbitrary integer $M\geq 2$, define the cyclic distance function $d_{M}:[1:M]\times[1:M]\to[0:\lfloor\frac{M}{2}\rfloor]$ as
\begin{equation}\label{fj}
    d_{M}(i,j)\triangleq\min\{|i-j|,M-|i-j|\}.
\end{equation}
\begin{lemma}\label{lem1}
    The matrix \([q^{d_M(i,j)}]_{i,j\in[1:M]}\) with \(q\in(0,1)\) is positive definite.
    
\end{lemma}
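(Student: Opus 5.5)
The matrix $\mathbf{Q}\triangleq[q^{d_M(i,j)}]_{i,j\in[1:M]}$ depends only on $(j-i)\bmod M$, since $d_M(i,j)$ is a function of $|i-j|$ and is invariant under $k\mapsto M-k$. Hence $\mathbf{Q}$ is a real symmetric circulant matrix with first row $c_k=q^{\min\{k,M-k\}}$, $k\in[0:M-1]$. The plan is to diagonalize $\mathbf{Q}$ by the discrete Fourier basis and show that every eigenvalue is strictly positive. The eigenvalues of a circulant matrix are
\begin{equation*}
\lambda_m=\sum_{k=0}^{M-1} c_k\,\omega^{mk},\qquad \omega=e^{2\pi \mathrm{i}/M},\ m\in[0:M-1],
\end{equation*}
and by the symmetry $c_k=c_{M-k}$ these are real. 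It therefore suffices to prove $\lambda_m>0$ for all $m$.

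To compute $\lambda_m$ in closed form, I would split the sum into the two geometric pieces $k\le M/2$ and $k>M/2$. Using $c_k=q^k$ on the first piece and $c_k=q^{M-k}$ on the second, substituting $k'=M-k$ in the second piece and using $\omega^{mM}=1$ gives
\begin{equation*}
\lambda_m=\sum_{k:\,k\le M/2}q^k\omega^{mk}+\sum_{k':\,1\le k'<M/2}q^{k'}\omega^{-mk'}.
\end{equation*}
This is a sum of two geometric series. I would treat $M$ odd and $M$ even separately, because in the even case the middle term $k=M/2$ appears only once. Summing the series and using $\omega^{mM/2}=(-1)^m$ when $M$ is even, or $\omega^{m(M+1)/2}$ and its conjugate when $M$ is odd, should yield an expression of the form
\begin{equation*}
\lambda_m=\frac{(1-q^2)\bigl(1\pm(\text{sign factor})\,q^{\lceil M/2\rceil}\cdots\bigr)}{1-2q\cos(2\pi m/M)+q^2}.
\end{equation*}

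An alternative route, which I would use to cross-check, is to note that $\mathbf{Q}$ is the periodization of the Toeplitz kernel $q^{|k|}$. The sequence $q^{|k|}$ has symbol $\frac{1-q^2}{1-2q\cos\theta+q^2}>0$. However, periodization by wrap-around is not exact here, because $\min\{k,M-k\}$ differs from the true periodic sum $\sum_{n}q^{|k+nM|}$. So the direct computation is the reliable one.

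The main obstacle is the sign check on the numerator. I expect it to reduce to something like $1-q^{M/2}(-1)^m>0$ in the even case, and to an analogous bound in the odd case, for example $1+q^{(M+1)/2}\cdots-q^{(M-1)/2}\cos(\cdot)\ldots$. Positivity should follow from $q\in(0,1)$. The denominator $1-2q\cos\theta+q^2\ge(1-q)^2>0$ is positive. The numerator in the odd case needs care: I would bound it below by $(1-q)\bigl(1-q^{(M-1)/2}\bigr)\cdot(\text{positive})$, or verify it directly using $|\cos|\le 1$. If the closed form becomes messy, a fallback is to write $\lambda_m=1+2\sum_{k=1}^{\lfloor (M-1)/2\rfloor}q^k\cos(2\pi mk/M)\,(+\,q^{M/2}(-1)^m)$ and apply a Fejér/Abel-summation argument, using the fact that $q^k$ is decreasing and convex.
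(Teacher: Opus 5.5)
Your proposal is correct and follows essentially the same route as the paper's proof: treat the matrix as a real symmetric circulant, write its DFT eigenvalues as cosine sums, split into $M$ odd and $M$ even, sum the geometric series in closed form, and check positivity using $1-2q\cos\theta+q^2\ge(1-q)^2>0$. Carrying out your sums with $\theta=2\pi m/M$ gives the numerator $(1-q)\bigl(1+q-2q^{(M+1)/2}\cos(\tfrac{M-1}{2}\theta)\bigr)$ for odd $M$ and $(1-q^2)\bigl(1-(-1)^m q^{M/2}\bigr)$ for even $M$, both clearly positive for $q\in(0,1)$, so the Fej\'er/Abel fallback is not needed. Your even-case guess is exactly right; it also corrects a small algebra slip in the paper's printed even-case numerator, though positivity holds either way.
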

\begin{proof}
The proof is provided in Appendix~\ref{App1}.
\end{proof}

Define \(\alpha_M( \epsilon)\) as 
\begin{align}
    \alpha_M(\epsilon)
    &\triangleq \frac{1}{\sum_{j=1}^M e^{-\epsilon d_M(1,j)}}, \nonumber\\    
\end{align}
which equals \(\frac{1}{M}\) when \(\epsilon=0\), and otherwise,
\begin{equation}\label{ald}
    \alpha_M(\epsilon)=\frac{(e^\epsilon-1)e^{\lfloor\frac{M}{2}\rfloor\epsilon}}{e^{\lfloor\frac{M}{2}\rfloor\epsilon}(e^{\epsilon}+1)-e^{\epsilon\cdot\mathds{1}_{\{M\textnormal{ is even}\}}}-1},\ \epsilon>0.
\end{equation}

The following theorem provides a closed form solution to the LP in \eqref{linprog}.
\begin{theorem}\label{optimalmapping}
    The optimal value of the LP problem in (\ref{linprog}) is given by
    \begin{equation}\label{pestar}
        p_e^*=(1-\delta)(1-\alpha_M(\epsilon)),
    \end{equation}
which is attained by the following optimal solution
\begin{equation}\label{optimizer}
    \mathbf{P}^*=[p_{i,j}^*]=(1-\delta)\alpha_M(\epsilon)\left[e^{-\epsilon d_M(i,j)}\right]+\delta\mathbf{I},
\end{equation}
which can be interpreted as a mixture of two kernels, transmitting information with $\infty$-DP with probability $\delta$, and with $\epsilon$-DP with probability $(1-\delta)$. 

Furthermore, let $\mathcal{P}_{\mathrm{opt}}$ denote the set of all optimal solutions. The following statements hold:

\begin{enumerate}

\item $\mathbf{P}^*$ is positive definite unless $(\epsilon,\delta)=(0,0)$, in which case it is positive semidefinite.

    \item Only when \(M=2\) or \((\epsilon,\delta)=(0,0)\), the optimal solution is unique, i.e., \(|\mathcal{P}_{\mathrm{opt}}|=1\). Otherwise, the set of optimal solutions is infinite, i.e., \(|\mathcal{P}_{\mathrm{opt}}|=\infty.\) 
\item When $\delta = 0$, for \(\mathbf{P}\in\mathcal{P}_{\mathrm{opt}}\), we have
\begin{equation}\label{shared}
  p_{i,j}=p^*_{i,j},\ i\in\left[\min\{j,\underline{m}\}:\max\{j,\overline{m}\}\right]  
\end{equation}
with \(\underline{m}\triangleq\lfloor\frac{M+1}{2}\rfloor\), and \(\overline{m}\triangleq\lceil\frac{M+1}{2}\rceil\). Therefore, any optimizer \(\mathbf{P}\in\mathcal{P}_{\mathrm{opt}}\) shares at least 
\[\frac{2M^2+10M-1+(2M+1)(-1)^M}{8}\]

entries with the canonical optimizer \(\mathbf{P}^*\) in \eqref{optimizer}, including all entries between the diagonal and the middle row(s).  An illustrative representation of the shared entries is provided in Figure \ref{fig_1} for the cases \(M=4,M=5\).
\begin{figure}[t]
 \centering 
 \scalebox{0.15} 
 {\includegraphics{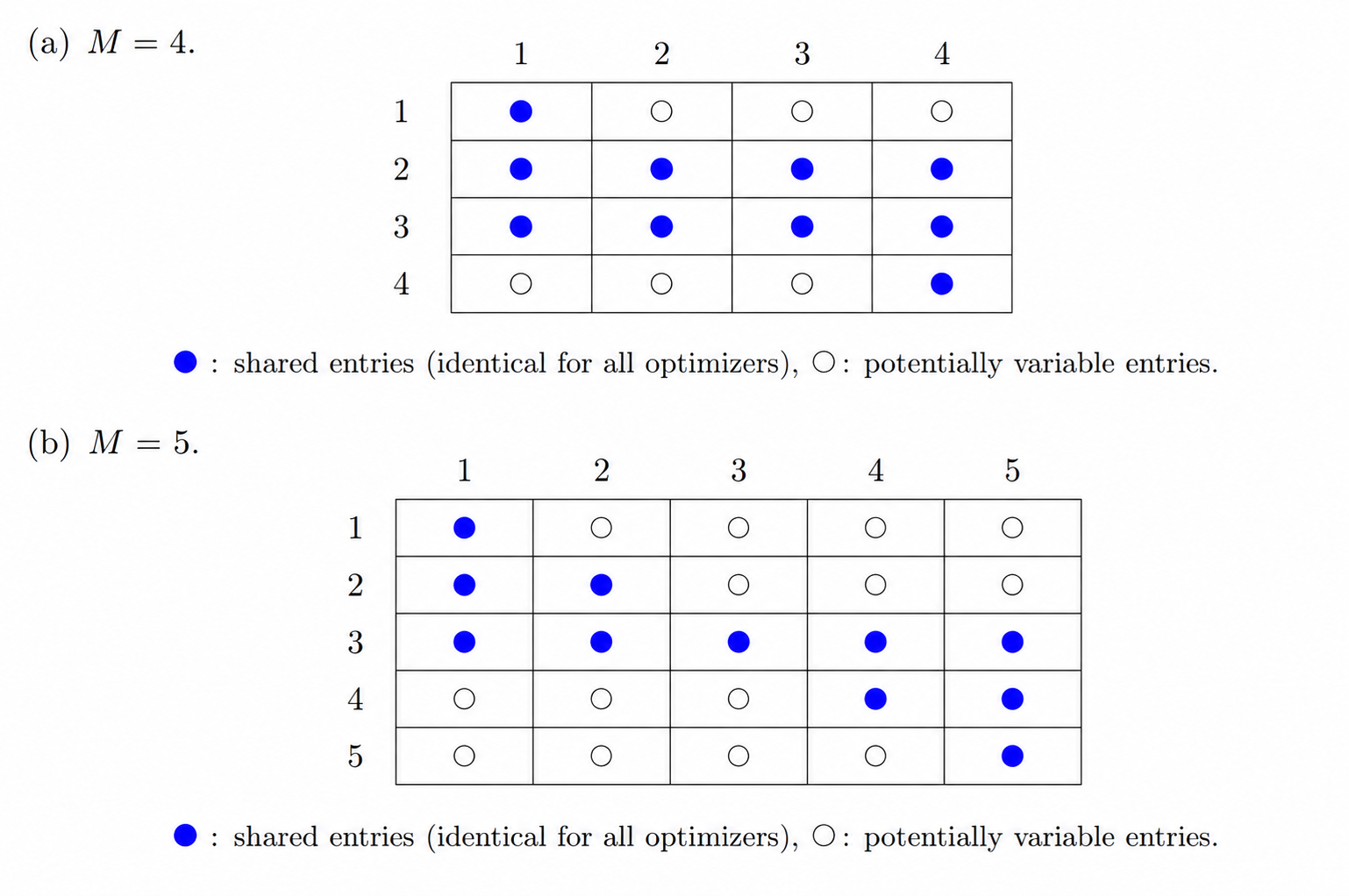}} 
 \caption{An illustrative representation of \eqref{shared} with \(\delta=0\).}
 \label{fig_1} 
\end{figure}

When $\delta>0$ all optimal solutions $\mathbf{P}\in\mathcal{P}_{\textnormal{opt}}$ share the same diagonal entries, i.e., \(p_{i,i}=p^*_{i,i}\ \forall i\). Furthermore, we have
\begin{align}
    \sum_{j=1}^ip_{i,j}&=\sum_{j=1}^ip^*_{i,j},\ \ i\leq\lceil\frac{M+1}{2}\rceil\nonumber\\
    \sum_{j=i}^Mp_{i,j}&=\sum_{j=i}^Mp^*_{i,j},\ \ i\geq\lceil\frac{M+1}{2}\rceil. \label{sumcondition}
\end{align}

\item All members of $\mathcal{P}_{\mathrm{opt}}$ are full-rank when $\delta+(1-\delta)\alpha_M(\epsilon)>\frac{1}{2}$, or when \(M=3,\ \epsilon>0\) or when \(M=4,\ \epsilon>0,\ \delta=0\).

\end{enumerate}
\end{theorem}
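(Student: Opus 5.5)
The plan is to prove the value formula \eqref{pestar} by matching an achievability bound with a converse, and then to read the structural statements off the active constraints of $\mathbf{P}^*$.

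\emph{Achievability.} We show $\mathbf{P}^*\in\mathcal{S}_{\epsilon,\delta}$ and compute its diagonal. The kernel $\mathbf{K}=\alpha_M(\epsilon)[e^{-\epsilon d_M(i,j)}]$ is row-stochastic, since each row is a cyclic shift of the first. Moreover, $|d_M(k,j)-d_M(i,j)|\le 1$ whenever $|k-i|=1$, so $\mathbf{K}$ is $\epsilon$-DP. For the mixture $(1-\delta)\mathbf{K}+\delta\mathbf{I}$, take any set $\mathcal{U}$. Then
\[
\sum_{\mathcal{U}}p_{k,j}\le(1-\delta)e^\epsilon\sum_{\mathcal{U}}K_{i,j}+\delta\le e^\epsilon\sum_{\mathcal{U}}p_{i,j}+\delta .
\]
Every diagonal entry equals $\delta+(1-\delta)\alpha_M(\epsilon)$, which gives the achievable error.

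\emph{Converse.} We want $\gamma(\mathbf{P})\le\delta+(1-\delta)\alpha_M(\epsilon)$ for every feasible $\mathbf{P}$. Fix the output $j$ and a row $i$ at cyclic distance $d$ from it. Chaining the DP constraint along the linear path from $j$ to $i$ with singleton sets gives
\[
p_{j,j}\le e^{\epsilon}p_{j\pm1,j}+\delta\le\cdots,
\]
i.e. $p_{i,j}\ge e^{-\epsilon}(p_{i',j}-\delta)$ for the neighbour $i'$ of $i$ nearer to $j$. Iterating this lower-bounds the off-diagonal mass in each column in terms of $x=\gamma(\mathbf{P})$, and the cyclic distance structure suggests bounding along the shorter side.

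Singletons alone lose too much, because $\delta$ accumulates at every hop. So we use growing sets instead. For a row $i$, define the interval $\mathcal{U}_i$ of outputs on the far side of $i$, namely those closer to $i$ than to $i-1$. Applying the DP constraint to these sets, with neighbours $i$ and $i\pm1$, and summing the resulting $M$ inequalities, weighted by the LP dual multipliers, should telescope to
\[
\sum_i p_{i,i}\le M\bigl(\delta+(1-\delta)\alpha_M\bigr).
\]
To find those multipliers, we will solve the dual LP guided by the active constraints of $\mathbf{P}^*$. For $\mathbf{P}^*$, the active sets are exactly the intervals of outputs closer to $k$ than to $i$; by complementary slackness these are the only ones that can carry positive dual weight. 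Exhibiting a feasible dual with value $\delta+(1-\delta)\alpha_M(\epsilon)$ then certifies optimality. We expect this dual construction, and in particular the parity cases of $M$, to be the main obstacle.

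\emph{Item 1.} $\mathbf{P}^*$ is a positive combination of the positive definite matrix of Lemma~\ref{lem1} (with $q=e^{-\epsilon}$) and $\delta\mathbf{I}$. At $\epsilon=0$ that matrix is $\mathbf{J}/M$, which is only PSD; adding $\delta\mathbf{I}$ with $\delta>0$ restores positive definiteness. This gives the claimed dichotomy.

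\emph{Items 3 and 4.} Here we use the propagation argument. Any optimizer must satisfy every constraint carrying positive dual weight with equality, and must have all diagonal entries equal to the optimum. Starting from $p_{j,j}=p^*_{j,j}$ and moving row by row toward the middle row(s), the tight constraints force equality of entries when $\delta=0$, and of partial row sums when $\delta>0$; this yields \eqref{shared} and \eqref{sumcondition}. Counting the forced entries for each $M$ parity gives the stated number of shared entries.

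\emph{Item 2.} For $M=2$ the propagation pins down every entry, since the rows sum to one. For $(\epsilon,\delta)=(0,0)$, all rows must be identical with every diagonal entry at least $1/M$, which forces $\mathbf{J}/M$. Otherwise we construct a one-parameter perturbation that keeps the forced entries fixed and moves mass among the unforced entries beyond the middle row without breaking DP. This works because the relevant constraints are slack there, and convexity of $\mathcal{P}_{\mathrm{opt}}$ then gives infinitely many optimizers.

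\emph{Item 4 (full rank).} When $\gamma>\tfrac12$, every optimizer is strictly diagonally dominant, hence nonsingular. For $M=3$ and $M=4$ with $\epsilon>0$, the shared entries from item 3 determine enough of the matrix that a direct determinant computation should show it is nonzero.
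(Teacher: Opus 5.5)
Your achievability argument and item~1 are fine. The converse, which is the core of the theorem, has a genuine gap: you never exhibit the certificate, and the inequality you say it should telescope to is false. The unweighted bound $\sum_i p_{i,i}\le M\bigl(\delta+(1-\delta)\alpha_M(\epsilon)\bigr)$ already fails for $M=3$, $\delta=0$, $e^{\epsilon}=2$, where $\alpha_3(\epsilon)=\tfrac{1}{2}$. Take
\[
\frac{1}{20}\begin{bmatrix}12&5&3\\6&8&6\\3&5&12\end{bmatrix}.
\]
It is stochastic, and every column ratio between adjacent rows lies in $[\tfrac{1}{2},2]$, so it is $\epsilon$-DP. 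Yet its trace is $\tfrac{8}{5}>\tfrac{3}{2}$. A nonnegative combination of valid constraints cannot produce a false inequality. Hence the dual multipliers on the constraints $x\le p_{i,i}$ cannot be uniform, whatever weights you place on the DP constraints.

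Your family of sets is also off. For $\epsilon>0$, the suffix constraint from a row $2\le i\le M/2$ toward row $i-1$ is slack at $\mathbf{P}^*$, so it must receive zero weight. ``One inequality per row'' is therefore not the active structure.

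The missing idea is to orient the whole argument toward a middle row $c=\lfloor M/2\rfloor+1$. Chain the prefix constraints $\sum_{j\le i}p_{i,j}\le e^{\epsilon}\sum_{j\le i}p_{i+1,j}+\delta$ for $i\le\lfloor M/2\rfloor$ into row $c$. Do the same with the mirrored suffix constraints starting from row $M$. Row $c$ sums to one. Your growing-sets intuition is correct here: each diagonal entry is charged a single $\delta$, discounted together with it. One obtains
\[
\sum_{i=1}^{M}e^{-\epsilon|i-c|}\,p_{i,i}\le 1+\delta\Bigl(\frac{1}{\alpha_M(\epsilon)}-1\Bigr).
\]
Since $\sum_i e^{-\epsilon|i-c|}=1/\alpha_M(\epsilon)$, this gives $\gamma(\mathbf{P})\le\delta+(1-\delta)\alpha_M(\epsilon)$. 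In the example above the weighted sum is exactly $1$.

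This is the paper's propagation argument. Read as a dual certificate, it has weights $\mu_i\propto e^{-\epsilon|i-c|}>0$. Only with such a certificate does your complementary-slackness route to items~2 and~3 work. It gives the shared diagonal and the tight sums in \eqref{sumcondition}. For $\delta=0$, the singleton chains along each column toward $c$ give \eqref{shared}, using both middle rows when $M$ is even. As written, items~2 and~3 inherit the gap.

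Item~4 for $M=3,4$ is only asserted. The shared entries do not by themselves force nonsingularity, as the paper's singular $M=5$ optimizer shows, so the DP bounds on the free entries must be used. For $M=3$, the paper shows the determinant polynomial is monotone in the two free parameters and positive at their lower corner. For $M=4$, it left-multiplies by an auxiliary invertible $\mathbf{B}$ so that $\mathbf{BP}$ is strictly diagonally dominant.
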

\begin{proof}
There are two approaches to establish \eqref{pestar}, namely a direct approach and an indirect (contradiction-based) approach. In the direct approach, we proceed as follows. The first line of constraints in the LP in \eqref{linprog} comprises \(2(M-1)2^M\) inequalities, where \(2(M-1)\) accounts for the adjacency condition and \(2^M\) corresponds to all subsets \(\mathcal{U} \subseteq \mathcal{M}\). These constraints are first equivalently reduced to \(2(M-1)\) nonlinear constraints, and subsequently reformulated as \(2(M-1)(2M+1)\) linear constraints through the introduction of auxiliary variables. The optimal solution can then be derived via strong duality and the Karush–Kuhn–Tucker (KKT) conditions. 

Alternatively, an indirect proof by contradiction may be employed, which provides a more intuitive geometric interpretation and is essential for establishing the remaining claims of the theorem. In this work, we adopt the indirect approach.

For readability, we henceforth write \(\alpha\) in place of \((1-\delta)\alpha_M(\epsilon)\), and organize the proof into several steps.



\textbf{$\mathbf{P}^*$ is feasible.} Every entry of $\mathbf{P}^*$ is non-negative, each row sums to one, where the latter can be readily verified from \(\sum_{j=1}^Me^{-\epsilon d_M(i,j)}=\frac{1}{\alpha_M(\epsilon)},\ \forall i\). As a result, $\mathbf{P}^*$ is a valid stochastic matrix. 

For any index set $\mathcal{U}\subset[1:M]$ and adjacent rows $i,k$, we can write
\begin{align}
    \sum_{j\in \mathcal{U}} p^*_{k,j}&=\alpha\sum_{j\in \mathcal{U}} e^{-\epsilon d_M(k,j)}+\delta\cdot\mathds{1}_{\{k \in \mathcal{U}\}}\label{dav}\\
    &\leq \alpha e^{\epsilon|k-i|}\sum_{j\in \mathcal{U}} e^{-\epsilon d_M(i,j)}+\delta\cdot\mathds{1}_{\{k \in \mathcal{U}\}}\label{siz}\\
    &= \alpha e^{\epsilon}\sum_{j\in \mathcal{U}} e^{-\epsilon d_M(i,j)}+\delta\cdot\mathds{1}_{\{k \in \mathcal{U}\}}\label{sizz}\\
    &= e^{\epsilon}\left(\sum_{j\in \mathcal{U}} p^*_{i,j}-\delta\cdot\mathds{1}_{\{i \in \mathcal{U}\}}\right)+\delta\cdot\mathds{1}_{\{k \in \mathcal{U}\}}\nonumber\\
    &\leq  e^{\epsilon}\sum_{j\in \mathcal{U}}p^*_{i,j} + \delta,\nonumber
\end{align}
where \eqref{dav} follows from \eqref{optimizer}, \eqref{siz} follows from the definition of \(d_M\), which is a \(1-\)Lipschitz function, and \eqref{sizz} follows from the adjacency of \(i\) and \(k\), namely, \(|i-k|=1\). Therefore, we have
$\mathbf{P}^*\in\mathcal{S}_{\epsilon, \delta}$, i.e., $\mathbf{P}^*$ is $(\epsilon,\delta)$-differentially private.

\textbf{$\mathbf{P}^*$ is an optimal solution.}
We show, by contradiction, that no other member of $\mathcal{S}_{\epsilon, \delta}$ can outperform $\mathbf{P}^*$, i.e., result in a probability of error lower than $p_e^*$ in (\ref{pestar}). The proof proceeds by a propagation of tight (or active) DP constraints. More specifically, we show that any feasible perturbation that increases all diagonal entries necessarily induces a cascade of increases in neighboring entries. These increases propagate towards the center of the matrix and force the sum of the middle row(s) to exceed one, contradicting stochasticity.

To convey the main idea, we first consider the illustrative case \(M=5\); the argument for arbitrary \(M\) is just a simple generalization and is provided in Appendix \ref{App2}. Accordingly, let \(M=5\). It is convenient to rewrite \(\mathbf{P}^*\) as

\begin{equation}\label{rewrite}
   \mathbf{P}^*=\begin{bNiceMatrix} 
   {\color{red}a}&{\color{red}b}&c&c&b\\
   {\color{red}b}&{\color{red}a}&{\color{red}b}&c&c\\
   c&{\color{red}b}&{\color{red}a}&{\color{red}b}&c\\
   c&c&{\color{red}b}&{\color{red}a}&{\color{red}b}\\
   b&c&c&{\color{red}b}&{\color{red}a}
   \CodeAfter
  \tikz \draw[dashed] (2-1.north west) rectangle (3-2.south east);
   \tikz \draw[dashed] (3-4.north west) rectangle (4-5.south east);   
    \end{bNiceMatrix},
\end{equation}
with $c\triangleq\alpha e^{-2\epsilon},\ b\triangleq e^\epsilon c,\  a\triangleq e^\epsilon b+\delta$, and the rationale for highlighting certain entries in red will become clear shortly.



Assume, for the sake of contradiction, that there exists a matrix \(\mathbf{P}'\in\mathcal{S}_{\epsilon,\delta}\) that outperforms \(\mathbf{P}^*\). In particular, since the objective in (\ref{linprog}) is to maximize the minimum diagonal entry, all diagonal entries of \(\mathbf{P}'\) must be strictly larger than \(a\).

Since \(\mathcal{S}_{\epsilon,\delta}\) is convex, it contains every point on the line segment connecting \(\mathbf{P}^*\) and \(\mathbf{P}'\). Consider moving from \(\mathbf{P}^*\) toward \(\mathbf{P}'\) along the direction \(\mathbf{P}'-\mathbf{P}^*\). Referring to (\ref{rewrite}), all diagonal entries, i.e., \(a\)'s, must increase. Moreover, to maintain the \((\epsilon,\delta)\)-DP constraints, the first superdiagonal and subdiagonal entries, i.e., \(b\)'s, that satisfy the active constraint
$a=e^{\epsilon}b+\delta
$
must also increase. Indeed, increasing \(a\) while keeping the corresponding \(b\) unchanged would violate the privacy constraint. For this reason, the entries that must increase when moving away from \(\mathbf{P}^*\) are highlighted in red in (\ref{rewrite}). Therefore, in \(\mathbf{P}'\), all \(13\) highlighted entries are strictly larger than their corresponding values in \(\mathbf{P}^*\), namely \(a\) and \(b\).

Focusing on the third row in (\ref{rewrite}), observe that the second, third, and fourth entries must increase. We show that the sum of the entries in this row necessarily exceeds one. This contradicts the requirement that each row of a stochastic matrix sums to one.

Focusing on the left dashed rectangle in \ (\ref{rewrite}), consider the sum of the first two entries of row \(2\), namely \(b+a\), and compare it with the corresponding sum in row \(3\), namely \(c+b\). We observe that
$b+a = e^\epsilon(c+b)+\delta,
$
which means that the \((\epsilon,\delta)\)-DP constraint associated with adjacent rows \(2\) and \(3\) and the index set \(\mathcal{U}=\{1,2\}\), i.e., 
\begin{align*}
b+a=\sum_{j\in\mathcal{U}}p^*_{2,j}
\le e^\epsilon \sum_{j\in\mathcal{U}}p^*_{3,j}+\delta
= e^\epsilon(c+b)+\delta,
\end{align*}
is active, i.e., equality holds. As a result, any increase in \(b+a\), which is the case here, necessarily induces an increase in \(c+b\) in order to preserve the privacy constraint. For brevity, we denote this implication by $(b+a)\uparrow \Longrightarrow (c+b)\uparrow$. A similar argument applies to the right dashed rectangle in \(\mathbf{P}^*\) in (\ref{rewrite}). By comparing the sum of the last two entries of row \(4\) with the corresponding sum in row \(3\), we conclude that any increase in the former necessarily induces an increase in the latter in order to preserve the privacy constraint. Consequently, the sums of both the first two and the last two entries of row \(3\) must increase. Furthermore, since the third entry of row (3) already increases, it follows that the sum of the entries in row \(3\) must exceed its current value of one, thereby violating the stochasticity.

We have therefore reached a contradiction. Hence, no matrix \(\mathbf{P}'\in\mathcal{S}_{\epsilon,\delta}\) can have all diagonal entries strictly larger than those of \(\mathbf{P}^*\) while remaining a valid probability transition matrix. It follows that \(\mathbf{P}^*\) is optimal.

The extension of this reasoning to arbitrary \(M\) does not yield any novel insights and is provided in Appendix~\ref{App2}.

\textbf{Proof of statement 1.}
In Lemma \ref{lem1}, replace \(q\) with \(e^{-\epsilon}\) ($\epsilon>0$) and note that a convex combination of two positive definite matrices is positive definite. At \((\epsilon,\delta)=(0,0)\), we have \(\mathbf{P}^*=\frac1M\mathbf{J}\), which is positive semidefinite.




\textbf{Proof of statement 2.}
First, we establish that there exist infinitely many optimal solutions whenever $M \geq 3$ and $(\epsilon,\delta) \neq (0,0)$. To this end, we construct a matrix $\mathbf{P}' \in \mathcal{P}_{\textnormal{opt}}$, distinct from $\mathbf{P}^*$. Consequently, every convex combination
\(
\lambda \mathbf{P}^* + (1-\lambda)\mathbf{P}', \ \lambda \in [0,1],
\)
is feasible and achieves the same objective value. It follows that the optimization problem admits infinitely many optimal solutions.


For a sufficiently small $\zeta>0$, construct a new matrix $\mathbf{P}'$ by replacing $p^*_{1,2}$ and $p^*_{1,M}$ in \(\mathbf{P}^*\) with
\[
p'_{1,2}\triangleq p^*_{1,2}+\zeta
\quad\text{and}\quad
p'_{1,M}\triangleq p^*_{1,M}-\zeta,
\]
respectively, while leaving all other entries unchanged. Clearly, $\mathbf{P}' \neq \mathbf{P}^*$, and $\mathbf{P}'$ remains a stochastic matrix with the same diagonal entries as $\mathbf{P}^*$.
It remains to verify that $\mathbf{P}'$ satisfies the differential privacy constraints. 

If $\delta=0$, from the condition $(\epsilon,\delta)\neq(0,0)$, we must have $\epsilon>0$. Therefore, the DP constraint in this case is equivalent to the requirement that \(p'_{1,2}\in[e^{-\epsilon} p_{2,2}^*,e^\epsilon p^*_{2,2}]\) and \(p'_{1,M}\in[e^{-\epsilon} p_{2,M}^*,e^\epsilon p^*_{2,M}]\). Since $e^\epsilon>1$, for sufficiently small $\zeta>0$, these conditions are satisfied.

If $\delta>0$, in the proof of the optimality of $\mathbf{P}^*$, we observed that the differential privacy constraints are active only when one of the compared probabilities is a diagonal entry. Therefore, it suffices to verify the constraints corresponding to the subsets
\[
\mathcal{U}=\{2\},\ \{1,2\},\ \{1,M\}.
\]
The set \(\mathcal{U}=\{2,M\}\) is excluded since \(p'_{1,2}+p'_{1,M}\) remains unchanged.

A straightforward calculation shows that since \(\delta> 0\), for sufficiently small $\zeta>0$, all of the constraints corresponding to these index sets remain satisfied. Hence,
\(
\mathbf{P}' \in \mathcal{S}_{\epsilon,\delta}.
\)


Finally, when $M=2$ or $(\epsilon,\delta)=(0,0)$, we observe that, as in the proof of the optimality of $\mathbf{P}^*$, any perturbation of $\mathbf{P}^*$ necessarily degrades the utility. Hence, no other feasible matrix can attain the same optimal objective value, and therefore $\mathbf{P}^*$ is the unique optimal solution.

\textbf{Proof of statement 3.}
We next prove that when $M\geq 3$ and $\delta=0$, every optimal solution has the same diagonal entries, the same middle row(s), and the same entries between the middle row(s) and the diagonal.

The proof follows the same underlying idea as the proof of the optimality of $\mathbf{P}^*$, namely, the propagation of the tightness of the differential privacy constraints. 


Assume that $M$ is odd, and there is an optimal solution $\mathbf{P}'$ whose row $\frac{M+1}{2}$ differs from that of $\mathbf{P}^*$. Since the row sums to one, we must have $p'_{\frac{M+1}{2},k}< p^*_{\frac{M+1}{2},k}$ for some $k\in[1:M]$. By $\epsilon$-DP constraint, we must have
\begin{align*}
    p'_{k,k}&\leq (e^\epsilon)^{|\frac{M+1}{2}-k|}p'_{\frac{M+1}{2},k}\\
    &<(e^\epsilon)^{|\frac{M+1}{2}-k|}p^*_{\frac{M+1}{2},k}\label{as2}\\
    &=(e^\epsilon)^{|\frac{M+1}{2}-k|}(1-p_e^*)e^{-|\frac{M+1}{2}-k|\epsilon}\nonumber\\
    &=(1-p_e^*)\nonumber.
\end{align*}
Therefore, $\mathbf{P}'$ is not optimal since it has a diagonal entry ($p'_{k,k}$) strictly lower than $1-p_e^*$. 

For even $M$, the procedure follows similarly, and hence, is omitted. As a result, we conclude that all the optimizers have the same row $\frac{M+1}{2}$ (for odd $M$), and rows $\frac{M}{2}$ and $\frac{M}{2}+1$ (for even $M$). 

To establish the common diagonal entries, assume that $\mathbf{P}'$ is an optimizer with $p'_{k,k}> p^*_{k,k}$ for some $k\in[1:M]$. However, this implies $p'_{\frac{M+1}{2},k}>p^*_{\frac{M+1}{2},k}$ (for odd $M$) and $p'_{\frac{M}{2},k}>p^*_{\frac{M}{2},k}$ (for even $M$), which, as shown earlier, contradicts the optimality of \(\mathbf{P}'\). Therefore, all optimal solutions must also share the same diagonal entries as \(\mathbf{P}^*\) when $\epsilon>0$ and $\delta=0$. 

Since the \(\epsilon\)-DP constraint is active for the entries between the middle row(s) and the diagonal entries in $\mathbf{P}^*$, the same propagation argument shows that these entries in every \(\mathbf{P}\in\mathcal{P}_{\textnormal{opt}}\) also coincide with those of $\mathbf{P}^*$.

Finally, when $\delta>0$, the proof that all optimal solutions have the same diagonal entries is stronger than the corresponding argument used to establish the optimality of $\mathbf{P}^*$. In the latter, it suffices to rule out the existence of an optimal solution whose diagonal entries are all strictly larger than those of $\mathbf{P}^*$. Here, however, we prove the stronger statement that no diagonal entry can exceed its counterpart in $\mathbf{P}^*$. Despite this stronger conclusion, this proof and that of \eqref{sumcondition} do not require additional ideas and are therefore deferred to Appendix~\ref{App2}.

\textbf{Proof of statement 4.} If \(\delta+(1-\delta)\alpha_M(\epsilon)>\frac{1}{2}\), all optimal solutions are strictly diagonally dominant; that is, the diagonal entry in each row is strictly larger than the sum of the off-diagonal entries in that row. This can be readily verified by noting that (i) each diagonal entry in any optimal solution is \(\delta + (1-\delta)\alpha_M(\epsilon)\), and (ii) each row sums to one. Therefore, all solutions are strictly diagonally dominant, and by the Gershgorin circle theorem~\cite{varga}, or by Levy–Desplanques theorem \cite{horn2012matrix}, they are full rank.  

When \(M=3,\ \epsilon>0\), defining $k\triangleq e^\epsilon>1$ and $\delta'\triangleq\frac{k+2}{1-\delta}\delta$, all the optimal solutions \(\mathbf{P}\in\mathcal{P}_{\textnormal{opt}}\) have the following form
\begin{equation}\label{QQ}
    \mathbf{P}\triangleq \frac{1-\delta}{k+2}\begin{bmatrix}
    k+\delta' & {u'} & {u} \\
    1 & k+\delta' & 1\\
    v & v' & k+\delta' \\  
    \end{bmatrix},
\end{equation}
where the diagonal entries and the entries of rows \(2\) coincide with those of \(\mathbf{P}^*\) in \eqref{optimizer}, as established in this theorem. Furthermore, the four degrees of freedom satisfy two equality constraints, namely, $u+u' = v+v'= 2$, to ensure stochasticity of $\mathbf{P}$, and eight inequality constraints, namely, ${1}\leq u',v'\leq k(k+\delta'),\ \frac{1}{k}(1-\delta')^+\leq u,v\leq k$, to ensure \((\epsilon,\delta)\)-DP. 

Obviously, when $\delta'\geq 1$, the matrix becomes strictly diagonally dominant (since $k+\delta'>2$), and hence full rank. Therefore, in what follows, we consider the case $\delta'\in[0,1)$. The matrix \(\mathbf{P}\) can be equivalently written as
\[
\mathbf{P}=\frac{1-\delta}{k+2}
\begin{bmatrix}
k+\delta' & 2-u & u\\
1 & k+\delta' & 1\\
v & 2-v & k+\delta'
\end{bmatrix},
\ 
u,v\in\left[\frac{1-\delta'}{k},\,1\right],
\]
where the equality constraints only affect the upper bounds on \(u,v\).

A direct computation of the determinant yields
\[
\det(A)
=(\frac{1-\delta}{k+2})^3f(u,v),
\]
where
\begin{equation*}
    f(u,v)\triangleq (k+\delta')^3-(k+\delta')\bigl(uv+4-(u+v)\bigr)-2\bigl(uv-(u+v)\bigr).
\end{equation*}
In Appendix \ref{App4}, we show that \(f(u,v)>0\). 
Consequently, when $M=3,\epsilon>0$, any optimizer \(\mathbf{P}\in\mathcal{P}_{\textnormal{opt}}\) is full rank.

When \(\epsilon>0, \delta=0, M=4\), all the optimal solutions have the following form
\begin{equation}\label{QQ}
    \mathbf{P}\triangleq \frac{1}{(k+1)^2}\begin{bmatrix}
    k^2 & {x_1} & {x_2} &{ x_3}\\
    k & k^2 & k & 1\\
    1 & k & k^2 & k\\
    {y_3} & { y_2} & { y_1} & k^2    
    \end{bmatrix},
\end{equation}
where the diagonal entries and the entries of rows \(2\) and \(3\) coincide with those of \(\mathbf{P}^*\) in \eqref{optimizer}, as established in this theorem. Furthermore, the six degrees of freedom satisfy two equality constraints, namely, $x_1+x_2+x_3 = 2k+1,\ 
    y_1+y_2+y_3 = 2k +1$, to ensure stochasticity of $\mathbf{P}$, and twelve inequality constraints, namely, ${k}\leq x_1,y_1\leq k^3,\ 
    {1}\leq x_2,y_2 \leq k^2,\ \frac{1}{k}\leq x_3,y_3\leq k$, to ensure \(\epsilon\)-DP. 

Calculation of the determinant in this case is not as straightforward as the case with \(M=3\) even with the extra assumption that $\delta=0$. Instead, we follow a different approach.

Define
\begin{equation*}
    \mathbf{B}\triangleq \begin{bmatrix}
    k & -1 & 0 & 0\\
    0 & k & -1 & 0\\
    0 & -1 & k & 0\\
    0 & 0 & -1 & k  
    \end{bmatrix},
\end{equation*}
which is full rank, since $k>1$. If \(\mathbf{B}\mathbf{P}\) is full rank, it follows that \(\mathbf{P}\) is also full rank. To establish this, we have
\begin{equation*}
    \mathbf{BP}=\frac{1}{(k+1)^2}\begin{bmatrix}
    k^3-k & kx_1-k^2 & kx_2-k & kx_3-1\\
    k^2-1 & k^3-k & 0 & 0\\
    0 & 0 & k^3-k & k^2-1\\
    ky_3-1 & ky_2-k & ky_1-k^2 & k^3-k  
    \end{bmatrix}.
\end{equation*}
From the aforementioned constraints on \(x_i\) and \(y_i\), \(i\in[1:3]\), all entries of \(\mathbf{B}\mathbf{A}\) are nonnegative. Moreover, we have

\begin{align}
    kx_1-k^2 + kx_2-k + kx_3-1&=k(x_1+x_2+x_3)\nonumber\\
    &\ \ \ -k^2-k-1\nonumber\\
    &=k(2k+1)-k^2-k-1\label{cons1}\\
    &=k^2-1\nonumber\\
    &<k^3-k\label{cons2},
\end{align}
where \eqref{cons1} follows from the constraint \(x_1+x_2+x_3=2k+1\), and \eqref{cons2} follows from \(k>1\). Similarly, we can replace $x_i$'s with $y_i$'s $i\in[1:3]$. Therefore, the diagonal entry in each row is strictly larger than the sum of the remaining entries in that row. Hence, \(\mathbf{BP}\) is strictly diagonally dominant and is full rank by the Gershgorin circle theorem~\cite{varga}, or by Levy–Desplanques theorem \cite{horn2012matrix}. It follows that any optimizer \(\mathbf{P}\in\mathcal{P}_{\textnormal{opt}}\) is full rank.

\end{proof}

\begin{corollary}
    $p_e^*$ increases with $M$ for fixed $\epsilon, \delta$, and decreases with $\epsilon, \delta$ for fixed $M$. Furthermore, we have
    \begin{align*}
        \lim_{M\to\infty}p_e^*=\frac{2(1-\delta)}{1+e^{\epsilon}}\ ,\ \lim_{\epsilon, \delta\to 0}p_e^*
        \ =\frac{M-1}{M}.
    \end{align*}
\end{corollary}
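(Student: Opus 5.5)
The plan is to start from the closed form $p_e^*=(1-\delta)(1-\alpha_M(\epsilon))$ established in Theorem~\ref{optimalmapping} and to study $\alpha_M(\epsilon)=1/S_M(\epsilon)$, where $S_M(\epsilon)\triangleq\sum_{j=1}^M e^{-\epsilon d_M(1,j)}$. All monotonicity and limit claims then reduce to elementary statements about $S_M(\epsilon)$. The factor $(1-\delta)$ is positive and decreasing in $\delta$, and $1-\alpha_M(\epsilon)\ge 0$. Since $S_M(\epsilon)\ge 1$, we have $\alpha_M\le 1$, with equality only in the degenerate case $M=1$, which is excluded because $M\ge 2$. Hence $p_e^*$ is nonincreasing in $\delta$, and strictly decreasing whenever $\alpha_M<1$.

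For monotonicity in $\epsilon$, every term $e^{-\epsilon d_M(1,j)}$ with $d_M(1,j)\ge 1$ is strictly decreasing in $\epsilon$. Such a term exists since $M\ge 2$. So $S_M$ decreases, $\alpha_M$ increases, and $p_e^*$ decreases in $\epsilon$.

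For monotonicity in $M$, I would compare the multisets $\{d_M(1,j)\}_j$ for $M$ and $M+1$. The distances from node $1$ on a cycle of length $M$ are $0$, two copies of each $d\in[1:\lfloor (M-1)/2\rfloor]$, and one extra copy of $M/2$ when $M$ is even. Passing from $M$ to $M+1$ only adds one more nonnegative term:
\begin{itemize}
\item if $M$ is even, the extra term is $e^{-\epsilon M/2}$, since the single distance $M/2$ now appears twice;
\item if $M$ is odd, the extra term is $e^{-\epsilon (M+1)/2}$.
\end{itemize}
Thus $S_{M+1}=S_M+e^{-\epsilon\lceil M/2\rceil}>S_M$, so $\alpha_M$ strictly decreases and $p_e^*$ strictly increases in $M$, provided $\delta<1$, which holds by assumption.

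The limits are direct. As $M\to\infty$ with $\epsilon>0$, $S_M\to 1+2\sum_{d\ge1}e^{-\epsilon d}=1+\frac{2}{e^\epsilon-1}=\frac{e^\epsilon+1}{e^\epsilon-1}$. Hence $\alpha_M\to\frac{e^\epsilon-1}{e^\epsilon+1}$, and so $p_e^*\to(1-\delta)\frac{2}{1+e^\epsilon}$. One can double-check this against \eqref{ald} by dividing numerator and denominator by $e^{\lfloor M/2\rfloor\epsilon}$. The case $\epsilon=0$ is consistent, since then $\alpha_M=1/M\to 0$ and the formula gives $1-\delta$.

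For $\epsilon,\delta\to 0$ with $M$ fixed, $S_M$ is a finite sum of continuous functions of $\epsilon$, so $S_M\to M$. Therefore $\alpha_M\to 1/M$ and $p_e^*\to (M-1)/M$.

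The only mildly delicate step is the bookkeeping of the distance multiset in the $M\to M+1$ comparison, specifically the parity split. I would verify it on $M=2,3,4$ before writing it in general.
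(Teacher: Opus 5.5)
Your proof is correct. The paper omits this argument as straightforward, and reading every claim directly off $p_e^*=(1-\delta)\bigl(1-1/S_M(\epsilon)\bigr)$ is exactly the intended route. Your recursion $S_{M+1}=S_M+e^{-\epsilon\lceil M/2\rceil}$ checks out for both parities, and working with $S_M$ instead of the closed form for $\alpha_M(\epsilon)$ covers $\epsilon=0$ without a separate case.
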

\begin{proof}
    The proof is straightforward and omitted for brevity.
\end{proof}
A claim that all optimizers are full-rank is tempting but is too strong, as shown by the following counterexample.
\textbf{Example.} With $M=5,\ \epsilon =\ln1.1,\ \delta = 0$, an optimal solution is given by
\begin{equation*}
    \mathbf{P}=\frac{1}{5.41}\begin{bmatrix}
    1.21 & 1.331 & 1 & .9091 & .9599\\
    1.1 & 1.21 & 1.1 & 1 & 1\\
    1 & 1.1 & 1.21 & 1.1 & 1\\
    .9524 & 1.0476 & 1.1 & 1.21 & 1.1\\
    1 & 1.1 & 1 & 1.1 & 1.21
    \end{bmatrix},
\end{equation*}
whose second column is $1.1$ times its first column. Therefore, $\mathbf{P}$ is not full rank.

We conclude this section by presenting a property of the optimal set $\mathcal{P}_{\mathrm{opt}}$, which will be used in the next section.
\begin{definition}
The radius of the optimal set $\mathcal{P}_{\mathrm{opt}}$ around \(\mathbf{P^*}\) is defined as
\[
\Delta_M(\epsilon,\delta)
\triangleq
\sup_{\mathbf{P}\in\mathcal{P}_{\mathrm{opt}}}
\|\mathbf{P}-\mathbf{P}^*\|_{\max}.
\]
\end{definition}

Let the maximum and minimum off-diagonal entries of \(\mathbf{P}^*\) be denoted by
\begin{align*}
    \Phi(\mathbf{P}^*)&\triangleq(1-\delta)\alpha_M(\epsilon)e^{-\epsilon}\\
    \phi(\mathbf{P}^*)&\triangleq(1-\delta)\alpha_M(\epsilon)e^{-\lfloor\frac M2\rfloor\epsilon}.
\end{align*}

\begin{lemma}\label{lem2}
An upper bound on the radius of $\mathcal{P}_{\mathrm{opt}}$ centered at \(\mathbf{P^*}\) is provided as follows.
\begin{equation}
    \Delta_M(\epsilon,\delta)
\le
R_M(\epsilon,\delta),
\end{equation}
with \(R_M(\epsilon,\delta)\) defined as
\begin{align}
   R_M(\epsilon,\delta)&\triangleq\max\{\Phi_0-\phi(\mathbf{P}^*),\Phi(\mathbf{P}^*)-\phi_0\}, \label{RM}
\end{align}
where
\begin{align*}
   \Phi_0&\triangleq\min\{\overline{\Phi},(1-\alpha_M(\epsilon))(1-\delta)\} \\
   \overline{\Phi}&\triangleq e^{(M-1)\epsilon}(\delta+(1-\delta)\alpha_M(\epsilon))+\delta\frac{1-e^{(M-1)\epsilon}}{1-e^\epsilon}\\
   \phi_0&\triangleq\max\{\underline{\phi},0\}\\
   \underline{\phi}&\triangleq\frac{\delta+(1-\delta)\alpha_M(\epsilon)-\delta\frac{1-e^{(M-1)\epsilon}}{1-e^\epsilon}}{e^{(M-1)\epsilon}}.      
\end{align*}
\end{lemma}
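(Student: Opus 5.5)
Since the diagonal entries of every optimizer coincide with those of \(\mathbf{P}^*\) (Theorem~\ref{optimalmapping}, statement 3), any difference \(|p_{i,j}-p^*_{i,j}|\) can only occur off the diagonal. The plan is therefore to sandwich every off-diagonal entry \(p_{i,j}\) of an arbitrary \(\mathbf{P}\in\mathcal{P}_{\mathrm{opt}}\) between a universal lower bound \(\phi_0\) and a universal upper bound \(\Phi_0\). The off-diagonal entries of \(\mathbf{P}^*\) lie in \([\phi(\mathbf{P}^*),\Phi(\mathbf{P}^*)]\). Given such a sandwich, we get
\[
|p_{i,j}-p^*_{i,j}|\le\max\{\Phi_0-\phi(\mathbf{P}^*),\ \Phi(\mathbf{P}^*)-\phi_0\}=R_M(\epsilon,\delta),
\]
and taking the supremum over \(\mathcal{P}_{\mathrm{opt}}\) gives the lemma.

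\textbf{Upper bound.} The term \((1-\delta)(1-\alpha_M(\epsilon))\) is immediate. The row of \(\mathbf{P}\) sums to one, its diagonal entry equals \(a\triangleq\delta+(1-\delta)\alpha_M(\epsilon)\), and all entries are nonnegative. Hence any off-diagonal entry is at most \(1-a=(1-\delta)(1-\alpha_M(\epsilon))\). For \(\overline{\Phi}\), I will chain the singleton DP constraints \(\mathcal{U}=\{j\}\) along the path of adjacent rows from \(i\) to \(j\). This gives \(p_{i,j}\le e^{\epsilon}p_{i\pm1,j}+\delta\), and iterating \(n=|i-j|\) times yields
\[
p_{i,j}\le e^{n\epsilon}p_{j,j}+\delta\sum_{t=0}^{n-1}e^{t\epsilon}=e^{n\epsilon}a+\delta\frac{1-e^{n\epsilon}}{1-e^\epsilon}.
\]
The right-hand side is increasing in \(n\) and \(n\le M-1\), so it is at most \(\overline{\Phi}\).

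\textbf{Lower bound.} I use the same chain in the opposite direction, \(p_{j,j}\le e^{\epsilon}p_{j\pm1,j}+\delta\). By induction this gives \(a\le e^{n\epsilon}p_{i,j}+\delta\frac{1-e^{n\epsilon}}{1-e^\epsilon}\), and hence
\[
p_{i,j}\ge \Bigl(a-\delta\tfrac{1-e^{n\epsilon}}{1-e^\epsilon}\Bigr)e^{-n\epsilon}.
\]
I then need to check that this is minimized at \(n=M-1\), giving \(\underline{\phi}\). When the numerator is nonnegative it is decreasing in \(n\), and a decreasing positive quantity divided by an increasing one decreases. When the numerator is negative, the bound is vacuous and is replaced by \(0\) via nonnegativity, which is why \(\phi_0=\max\{\underline{\phi},0\}\). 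This monotonicity check is the only potentially delicate step. If the ratio is not monotone for some \(n\) where the numerator changes sign, it is still safe, because a negative value at smaller \(n\) yields the trivial bound \(0\) anyway, and \(\max\{\cdot,0\}\) handles it.

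\textbf{Edge case \(\epsilon=0\).} Here the geometric sums are interpreted as \(n\) (their limits), and the same argument applies verbatim. Combining the upper and lower bounds gives \(\phi_0\le p_{i,j}\le\Phi_0\) for every off-diagonal entry of every optimizer, which completes the argument.
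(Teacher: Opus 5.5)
Your proposal is correct and follows essentially the same route as the paper. Both use the shared diagonal entries from Theorem~\ref{optimalmapping} (statement 3) to reduce to off-diagonal entries, then sandwich every off-diagonal entry of an optimizer between $\phi_0$ and $\Phi_0$ via the row-sum constraint and repeated singleton DP constraints along a column; you additionally supply the details the paper only asserts, namely the explicit chained inequality, its monotonicity in the path length $n\le M-1$ (with $\max\{\cdot,0\}$ absorbing the sign change), and the $\epsilon=0$ limit.
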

\begin{proof}
    The proof is provided in Appendix \ref{App3}.
\end{proof}

\section{Practical considerations}
In many practical scenarios, a portion of the end-to-end communication chain is predetermined by the underlying application. Consequently, one generally does not have complete freedom to arbitrarily design the processing applied to the messages from source to destination. For example, when transmitting a set of messages over an additive white Gaussian noise (AWGN) channel using a given \(M\)-ary modulation scheme at a fixed signal-to-noise ratio SNR, the end-to-end transition can be represented by the product \(\mathbf{P}\mathbf{T}\), where \(\mathbf{T}\) denotes the stochastic matrix imposed by the application (i.e., the modulation scheme and the AWGN channel), while \(\mathbf{P}\) represents the mapping that can be designed to satisfy desired requirements.

Therefore, for a given communication medium characterized by the stochastic matrix $\mathbf{T}$, the problem becomes that of designing \(\mathbf{P}\) such that the overall transition matrix \(\mathbf{P}\mathbf{T}\) minimizes the worst-case error probability while satisfying \((\epsilon,\delta)\)-differential privacy. Let \(p_e^*(\mathbf{T})\) denote the corresponding optimal value, and let \(\mathbf{P}_{\mathbf{T}}^*\) denote an optimal solution. Evidently, when \(\mathbf{T}\) is the identity matrix, \(p_e^*(\mathbf{T})\) and \(\mathbf{P}_{\mathbf{T}}^*\) reduce to \(p_e^*\) and \(\mathbf{P}^*\), respectively, as obtained in the previous section.

It is readily verified that \(p_e^*(\mathbf{T})\) and \(\mathbf{P}_{\mathbf{T}}^*\) can be obtained by solving the linear program in \eqref{linprog} with \(p_{i,j}\) replaced by \([\mathbf{P}\mathbf{T}]_{i,j}\). This procedure yields the exact optimum, although solving the LP may become computationally demanding for large values of $M$.

However, solving the LP can often be avoided. Indeed, one may first verify certain conditions which, when satisfied, allow the optimal performance to be characterized without solving the LP. Even when these conditions do not hold, useful performance bounds can still be derived from structural properties of \(\mathbf{T}\). Such bounds may be sufficient in applications where an exact solution is not required. The remainder of this section is devoted to this analysis.

We first note that
\begin{equation}\label{LB1}
    p_e^*(\mathbf{T})\geq p_e^*,
\end{equation}
which follows immediately from 
\begin{equation*}
    \max_{\substack{\mathbf{P}:\ \mathbf{PT}\in\mathcal{S}_{\epsilon,\delta}}}\gamma(\mathbf{PT})\leq \max_{\substack{\mathbf{P}:\ \mathbf{P}\in\mathcal{S}_{\epsilon,\delta}}}\gamma(\mathbf{P}),
\end{equation*}
since for any matrix $\mathbf{P}$ feasible for the left-hand side, the product $\mathbf{PT}$ is feasible for the right-hand side.

As a result, if there exists a stochastic matrix $\mathbf{P}$ such that $\mathbf{PT}$ coincides with $\mathbf{P}^*$ in \eqref{optimizer}, then $\mathbf{P}$ is an optimal solution and we have $p_e^*(\mathbf{T})=p_e^*$. 

Therefore, before resorting to the LP for an exact solution, it is natural to first determine whether such a matrix \(\mathbf{P}\) exists.\footnote{It is important to note that failure to find such a $\mathbf{P}$ does not imply that $p_e^*(\mathbf{T})\neq p_e^*$, since the solution of the LP in \eqref{linprog} is not unique in general.} 
Obviously, if $\mathbf{T}$ is invertible and $\mathbf{P}^*\mathbf{T}^{-1}\geq 0$ entrywise, then $\mathbf{P}^*_\mathbf{T}\triangleq\mathbf{P}^*\mathbf{T}^{-1}$ is a valid stochastic matrix. Indeed, its rows sum to one, since 
$\mathbf{P}^*\mathbf{T}^{-1}\mathbf{1}=\mathbf{P}^*\mathbf{T}^{-1}(\mathbf{T1})=\mathbf{1}$. Consequently, $p_e^*(\mathbf{T})=p_e^*$.

An interesting observation is that the presence of \(\mathbf{T}\) in the model, i.e., the predetermined portion of the end-to-end communication chain, does not necessarily lead to a degradation in utility, namely the worst-case error probability. This motivates the study of sufficient and/or necessary conditions on \(\mathbf{T}\) under which the utility is preserved, i.e., $p_e^*(\mathbf{T})=p_e^*$. The following theorem provides an answer to this question.


\begin{theorem}
    For an invertible $\mathbf{T}$, the following statements hold.
    \begin{enumerate}
    \item If
\begin{equation}\label{nec}
\min_{1\le i,j\le M}
[\mathbf{P}^*\mathbf{T}^{-1}]_{ij}
<
-R_M(\epsilon,\delta)\,
\|\mathbf{T}^{-1}\|_{1},
\end{equation}
then, 
\begin{equation}
    p_e^*(\mathbf{T})>p_e^*,
\end{equation}
where $R_M(\epsilon,\delta)$ is given in \eqref{RM}.
    \item If either of the following conditions, i.e., 
    \begin{align}
        \|\mathbf{T}^{-1}-\mathbf{I}\|_{\infty}\leq(1-\delta)\alpha_M(\epsilon)e^{-\lfloor\frac{M}{2}\rfloor\epsilon},
    \end{align}
    or
     \begin{align}
        \|\mathbf{T}^{-1}-\mathbf{I}\|_{\textnormal{max}}\leq\frac{(1-\delta)\alpha_M(\epsilon)e^{-\lfloor\frac{M}{2}\rfloor\epsilon}}{M[(1-\delta)\alpha_M(\epsilon)+\delta]},
    \end{align}
    holds, we have
    \begin{equation*}
        p_e^*(\mathbf{T})=p_e^*.
    \end{equation*}
        \item  If $\mathbf{1}^T\in\textnormal{int}(\textnormal{cone}_+(\mathbf{t}_1,\ldots,\mathbf{t}_M))$, there exist $\epsilon_0,\delta_0>0$ such that
    \begin{equation}\label{tasavi}
        p_e^*(\mathbf{T})=p_e^*,\ \ \forall \epsilon,\delta: \epsilon\leq\epsilon_0,\ \delta\leq\delta_0.
    \end{equation}
    \item If $\mathbf{1}^T\not\in\textnormal{cone}_+(\mathbf{t}_1,\ldots,\mathbf{t}_M)$, 
no $\epsilon_0,\delta_0>0$ exist such that \eqref{tasavi} holds.
\item If $\mathbf{T}$ is circulant, when \(\delta=0\), we have $p_e^*(\mathbf{T})=p_e^*$ if and only if $\mathbf{P}^*\mathbf{T}^{-1}\geq 0$ entrywise.
\item If $\mathbf{T}$ is singular, and $(1-\delta)\alpha_M(\epsilon)+\delta>\frac{1}{2}$ or $M=3,\epsilon>0$ or \(M=4,\ \epsilon>0,\ \delta=0\), we have $p_e^*(\mathbf{T})>p_e^*$.
    \end{enumerate}
\end{theorem}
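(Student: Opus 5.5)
The organizing principle is the one already noted before the statement. We have $p_e^*(\mathbf{T})=p_e^*$ if and only if some stochastic $\mathbf{P}$ satisfies $\mathbf{PT}\in\mathcal{P}_{\mathrm{opt}}$. This holds because $\mathbf{PT}$ is feasible for the ideal LP \eqref{linprog}, and equality of optimal values forces $\mathbf{PT}$ to be an ideal optimizer. For invertible $\mathbf{T}$ this is equivalent to: some $\mathbf{Q}\in\mathcal{P}_{\mathrm{opt}}$ has $\mathbf{Q}\mathbf{T}^{-1}\ge 0$ entrywise. Row sums are automatic, since $\mathbf{T}^{-1}\mathbf{1}=\mathbf{1}$. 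Each statement then reduces to a property of the set $\{\mathbf{Q}\mathbf{T}^{-1}:\mathbf{Q}\in\mathcal{P}_{\mathrm{opt}}\}$.

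\textbf{Statements 1 and 2.} For statement 1, I will argue by contrapositive. Suppose $\mathbf{Q}\mathbf{T}^{-1}\ge0$ for some $\mathbf{Q}\in\mathcal{P}_{\mathrm{opt}}$, and write $\mathbf{Q}=\mathbf{P}^*+\mathbf{E}$. By Lemma~\ref{lem2}, $\|\mathbf{E}\|_{\max}\le R_M(\epsilon,\delta)$. Then
\[
[\mathbf{P}^*\mathbf{T}^{-1}]_{ij}\ge -|[\mathbf{E}\mathbf{T}^{-1}]_{ij}|\ge -\|\mathbf{E}\|_{\max}\sum_k|[\mathbf{T}^{-1}]_{kj}|\ge -R_M\|\mathbf{T}^{-1}\|_1,
\]
which contradicts \eqref{nec}.

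For statement 2, write $\mathbf{P}^*\mathbf{T}^{-1}=\mathbf{P}^*+\mathbf{P}^*(\mathbf{T}^{-1}-\mathbf{I})$. Every entry of $\mathbf{P}^*$ is at least $\phi(\mathbf{P}^*)=(1-\delta)\alpha_M e^{-\lfloor M/2\rfloor\epsilon}$. The perturbation term is bounded in two ways:
\begin{itemize}
\item $|[\mathbf{P}^*\mathbf{D}]_{ij}|\le\max_k\sum_j|d_{kj}|\cdot\sum_k p^*_{ik}$, which is at most $\|\mathbf{D}\|_\infty$. (Convexity of $\mathbf{D}\mapsto\sum_k p_{ik}|d_{kj}|$, bounded via the row sums.)
\item Splitting off the diagonal entry $p^*_{ii}=(1-\delta)\alpha_M+\delta$ and bounding the rest by $\|\mathbf{D}\|_{\max}$ gives the second, $\max$-norm condition. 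I will match the constant $M[(1-\delta)\alpha_M+\delta]$ by simply bounding each $p^*_{ik}\le p^*_{ii}$.
\end{itemize}
Either way $\mathbf{P}^*\mathbf{T}^{-1}\ge0$, so $p_e^*(\mathbf{T})=p_e^*$.

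\textbf{Statements 3 and 4.} These concern $\epsilon,\delta\to0$. Here $\mathbf{P}^*\to\frac1M\mathbf{J}$, and by Lemma~\ref{lem2} (where $R_M\to0$) the whole of $\mathcal{P}_{\mathrm{opt}}$ collapses to $\frac1M\mathbf{J}$. Also $\frac1M\mathbf{J}\mathbf{T}^{-1}=\frac1M\mathbf{1}\mathbf{1}^T\mathbf{T}^{-1}$, so the relevant object is the row vector $\mathbf{c}^T\triangleq\mathbf{1}^T\mathbf{T}^{-1}$. This vector is exactly the coefficient vector expressing $\mathbf{1}^T=\sum_i c_i\mathbf{t}_i$ in the rows $\mathbf{t}_i$ of $\mathbf{T}$.
\begin{itemize}
\item Interior membership in the cone means all $c_i>0$. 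By continuity, $\mathbf{P}^*\mathbf{T}^{-1}>0$ for small $(\epsilon,\delta)$, which gives statement 3.
\item If $\mathbf{1}^T$ is outside the cone, some $c_i<0$. Then every $\mathbf{Q}\mathbf{T}^{-1}$ with $\mathbf{Q}\in\mathcal{P}_{\mathrm{opt}}$ has a negative entry once $(\epsilon,\delta)$ is small, again by Lemma~\ref{lem2} and continuity. Hence equality fails for arbitrarily small parameters, which gives statement 4.
\end{itemize}

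\textbf{Statements 5 and 6.} For statement 5, the direction "$\Leftarrow$" is already known. For "$\Rightarrow$" with $\delta=0$, I will exploit the circulant symmetry. Suppose $\mathbf{Q}\in\mathcal{P}_{\mathrm{opt}}$ and $\mathbf{Q}\mathbf{T}^{-1}\ge0$. Let $\mathbf{\Pi}$ be the cyclic shift. The reflection/shift structure of the problem is not fully cyclic, since adjacency is on a path rather than a cycle, so plain cyclic averaging of $\mathbf{Q}$ may break feasibility. This is the main obstacle. What I will try first is the following. Statement 3 of Theorem~\ref{optimalmapping} fixes the middle row(s) of $\mathbf{Q}$ to equal those of $\mathbf{P}^*$. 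The middle row of $\mathbf{P}^*$ is a cyclic shift of every other row of $\mathbf{P}^*$, and $\mathbf{T}^{-1}$ is circulant and so commutes with $\mathbf{\Pi}$. Hence nonnegativity of the middle row of $\mathbf{Q}\mathbf{T}^{-1}$, which equals the middle row of $\mathbf{P}^*\mathbf{T}^{-1}$, transfers by shifting to every row of $\mathbf{P}^*\mathbf{T}^{-1}$.

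For statement 6, if $\mathbf{T}$ is singular then $\mathbf{PT}$ is singular. Under the listed conditions, statement 4 of Theorem~\ref{optimalmapping} says every optimizer is full rank, so $\mathbf{PT}\notin\mathcal{P}_{\mathrm{opt}}$ and therefore $p_e^*(\mathbf{T})>p_e^*$.
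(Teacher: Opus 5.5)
Your proposal is correct and follows essentially the same route as the paper. It uses the Lemma~\ref{lem2} perturbation bound for statement 1 (phrased as a contrapositive), the splitting $\mathbf{P}^*\mathbf{T}^{-1}=\mathbf{P}^*+\mathbf{P}^*(\mathbf{T}^{-1}-\mathbf{I})$ for statement 2, the signs of $\mathbf{1}^T\mathbf{T}^{-1}$ for statements 3--4, the shared middle row(s) plus circulance of $\mathbf{P}^*\mathbf{T}^{-1}$ for statement 5, and full-rankness of all optimizers for statement 6. The one small deviation is in statement 4, where you use Lemma~\ref{lem2} and continuity rather than simply evaluating at $(\epsilon,\delta)=(0,0)$, where the optimizer is unique; the paper itself mentions your route as an alternative.
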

\begin{proof}
For the first statement, let $\mathbf{P}\in\mathcal{P}_{\mathrm{opt}}$ and define \(\mathbf{E}\triangleq\mathbf{P}-\mathbf{P}^*\).
Then
\[
\mathbf{P}\mathbf{T}^{-1}
=
\mathbf{P}^*\mathbf{T}^{-1}
+
\mathbf{E}\mathbf{T}^{-1}.
\]
Moreover,
\begin{align*}
  |[\mathbf{E}\mathbf{T}^{-1}]_{ij}| &=|\sum_k e_{i,k}[\mathbf{T}^{-1}]_{k,j}|\nonumber\\
  &\leq \sum_k |e_{i,k}||[\mathbf{T}^{-1}]_{k,j}|\label{trtr}\\
  &\leq \|\mathbf{E}\|_{\textnormal{max}}\sum_k |[\mathbf{T}^{-1}]_{k,j}|\nonumber\\
  &\leq \|\mathbf{E}\|_{\textnormal{max}}\max_j\sum_k |[\mathbf{T}^{-1}]_{k,j}|\nonumber\\
  &\leq \|\mathbf{E}\|_{\textnormal{max}} \|\mathbf{T}^{-1}\|_1\nonumber\\
  &=\Delta_M(\epsilon,\delta)
\|\mathbf{T}^{-1}\|_{1}\nonumber\\
&\leq R_M(\epsilon,\delta)
\|\mathbf{T}^{-1}\|_{1}.
\end{align*}
Therefore, if \eqref{nec} holds, for any \(\mathbf{P}\in\mathcal{P}_{\mathrm{opt}}\), we have
\[
\min_{i,j}
[\mathbf{P}\mathbf{T}^{-1}]_{ij}
\le
\min_{i,j}
[\mathbf{P}^*\mathbf{T}^{-1}]_{ij}
+
\Delta_M(\epsilon,\delta)\,
\|\mathbf{T}^{-1}\|_{1}
<0,
\]
which proves the claim.

For the second statement, we note that when \(\mathbf{T}\) is invertible, we have
\begin{align*}
    \mathbf{P^*}\mathbf{T}^{-1}&=\mathbf{P^*}(\mathbf{I}+\mathbf{T}^{-1}-\mathbf{I})\\
    &=\mathbf{P^*}+\mathbf{P^*}(\mathbf{T}^{-1}-\mathbf{I}).
\end{align*}
As a result, a sufficient condition for \(\mathbf{P^*}\mathbf{T}^{-1}\geq\mathbf{0}\), is the requirement that the maximum entry of \(\mathbf{P^*}(\mathbf{T}^{-1}-\mathbf{I})\) is no larger than the minimum entry of \(\mathbf{P}^*\), i.e.,
\begin{equation*}
    \|\mathbf{P^*}(\mathbf{T}^{-1}-\mathbf{I})\|_{\textnormal{max}}\leq (1-\delta)\alpha_M(\epsilon)e^{-\lfloor\frac{M}{2}\rfloor\epsilon}.
\end{equation*}
Using the upper bounds on the max entry norm, we know
\begin{align*}
    &\|\mathbf{P^*}(\mathbf{T}^{-1}-\mathbf{I})\|_{\textnormal{max}}\leq\\&\min\left\{M\|\mathbf{P^*}\|_{\textnormal{max}}\|\mathbf{T}^{-1}-\mathbf{I}\|_{\textnormal{max}}\ ,\ \|\mathbf{P^*}\|_{\infty}\|\mathbf{T}^{-1}-\mathbf{I}\|_{\infty}\right\}
\end{align*}
Finally by noting that \(\ \|\mathbf{P^*}\|_{\infty}=1\) (due to stochasticity) and \(\ \|\mathbf{P^*}\|_{\textnormal{max}}=(1-\delta)\alpha_M(\epsilon)+\delta\), the sufficient conditions are derived.

    For the third statement, we proceed as follows. If the all-one vector lies in the interior of \(\textnormal{cone}_+(\mathbf{t}_1,\ldots,\mathbf{t}_M)\), there exist $\lambda_i>0$ such that $\mathbf{1}^T=\sum_{i=1}^M\lambda_i\mathbf{t}_i$ or equivalently $\mathbf{T}^T\mathbf{\lambda}=\mathbf{1}$. This implies that $(\mathbf{T}^T)^{-1}\mathbf{1}=(\mathbf{T}^{-1})^{T}\mathbf{1}>0$ entrywise. Equivalently, each column sum of $\mathbf{T}^{-1}$ is positive.

    In the sequel, we show that, in the high-privacy regime, i.e., for sufficiently small \(\epsilon,\delta\), \(\mathbf{P}^*\mathbf{T}^{-1}\) is entrywise nonnegative and hence constitutes a valid probability transition matrix, which results in \eqref{tasavi}.
 
 Let $s>0$ denote the minimum of the column sums of $\mathbf{T}^{-1}$. From Theorem \ref{optimalmapping}, we have $p_{i,j}^*\to\frac{1}{M},\ \forall i,j\in[1:M]$ when $\epsilon,\delta\to 0$. Hence, there exist $\epsilon_0,\delta_0>0$, such that $|p_{i,j}^*-\frac{1}{M}|<\frac{s}{M^2\|\mathbf{T}^{-1}\|_{\textnormal{max}}}, \ \forall i,j\in[1:M]$ when $\epsilon\leq\epsilon_0$ and $\delta\leq\delta_0$. As a result, we can write
 $\mathbf{P}^*=\frac{1}{M}\mathbf{J}+\mathbf{K}$ with $\mathbf{J}$ denoting the all-one matrix and $|k_{i,j}|<\frac{s}{M^2\|\mathbf{T}^{-1}\|_{\textnormal{max}}}, \ \forall i,j\in[1:M]$ when $\epsilon\leq\epsilon_0,\ \delta\leq\delta_0$. 
 
 Therefore, $\mathbf{P}^*\mathbf{T}^{-1}=\frac{1}{M}\mathbf{J}\mathbf{T}^{-1} +\mathbf{K}\mathbf{T}^{-1}$ when $\epsilon\leq\epsilon_0,\ \delta\leq\delta_0$. The minimum entry in the first term, i.e., \(\frac{1}{M}\mathbf{J}\mathbf{T}^{-1}\), is $\frac{s}{M}$. For the second term, i.e., \(\mathbf{K}\mathbf{T}^{-1}\), each entry is bounded as follows.
 \begin{align}
     |[\mathbf{K}\mathbf{T}^{-1}]_{i,j}|&=\bigg|\sum_{z=1}^Mk_{i,z}[\mathbf{T}^{-1}]_{z,j}\bigg|\nonumber\\
     &\leq \sum_{z=1}^M|k_{i,z}|\bigg|[\mathbf{T}^{-1}]_{z,j}\bigg|\label{trieq}\\
     &< \sum_{z=1}^M \frac{s}{M^2\|\mathbf{T}^{-1}\|_{\textnormal{max}}}\|\mathbf{T}^{-1}\|_{\textnormal{max}}\label{jayg}\\
     &=\frac{s}{M}\nonumber,
 \end{align}
 where \eqref{trieq} follows from the triangle inequality and \eqref{jayg} is obtained by replacing the corresponding upper bounds. Therefore, the minimum entry of \([\mathbf{K}\mathbf{T}^{-1}]\) is greater than $-\frac{s}{M}$, which makes all the entries of $\mathbf{P}^*\mathbf{T}^{-1}$ non-negative.
 Hence, in the high-privacy regime, we have $p_e^*(\mathbf{T})=p_e^*$. 

For the fourth statement, namely when \(\operatorname{cone}_+(\mathbf{t}_1,\ldots,\mathbf{t}_M)\) does not contain the all-one vector, a similar argument to that used at the beginning of the proof of the third statement shows that \(\mathbf{T}^{-1}\) must have at least one column whose entries sum to a negative value. It suffices to show that \(p_e^*(\mathbf{T})>p_e^*\) when \(\epsilon=\delta=0\). Suppose, to the contrary, that \(p_e^*(\mathbf{T})=p_e^*\). Then, we must have
\(\mathbf{P}_{\mathbf{T}}^*=\frac{1}{M}\mathbf{J}\mathbf{T}^{-1}.
\)
However, since \(\mathbf{T}^{-1}\) has a column with negative sum, \(\mathbf{P}_{\mathbf{T}}^*\) contains negative entries and therefore cannot be a valid probability transition matrix. Hence, no $\epsilon_0,\delta_0>0$ exist such that \eqref{tasavi} holds. This statement could also be proved by \eqref{nec} in the high-privacy regime, i.e., when \(\epsilon,\delta\to 0\), we have \(\mathbf{P}^*\to\frac1M\mathbf{J}\) and \(R_M(\epsilon,\delta)\to 0\).

For the fifth statement, assume that \(\mathbf{T}\) is a circulant stochastic matrix. One direction of the claim, namely the "if" part, is immediate. To prove the converse, suppose that \(p_e^*(\mathbf{T})=p_e^*\) for some \(\epsilon\). We show that \(\mathbf{P}^*\mathbf{T}^{-1}\) must be entrywise nonnegative.

Assume, to the contrary, that \([\mathbf{P}^*\mathbf{T}^{-1}]_{p,q}<0\) for some \(p,q\in[1:M]\). Since \(\mathbf{T}\) is circulant, so is its inverse \(\mathbf{T}^{-1}\). Consequently, \(\mathbf{P}^*\mathbf{T}^{-1}\) is also circulant, as circulant matrices form a commutative algebra. Therefore, the negative entry \([\mathbf{P}^*\mathbf{T}^{-1}]_{p,q}\) appears at least once in every row, and in particular, in row \(\frac{M+1}{2}\) when \(M\) is odd, and in row \(\frac{M}{2}\) when \(M\) is even.

Hence, some entry of row \(\frac{M+1}{2}\) (for odd \(M\)) or row \(\frac{M}{2}\) (for even \(M\)) of \(\mathbf{P}^*\mathbf{T}^{-1}\) is negative. In other words, the middle row of $\mathbf{P}^*$ times some column of $\mathbf{T}^{-1}$ is negative. However, Theorem~\ref{optimalmapping} establishes that all optimal solutions share these middle row(s). It follows that if \(\mathbf{P}^*\mathbf{T}^{-1}\) contains a negative entry, then so does \(\mathbf{P}'\mathbf{T}^{-1}\) for every other optimizer \(\mathbf{P}'\). Consequently, no optimizer can induce a valid probability transition matrix through multiplication by \(\mathbf{T}^{-1}\), which contradicts the assumption that \(p_e^*(\mathbf{T})=p_e^*\). Therefore, \(\mathbf{P}^*\mathbf{T}^{-1}\) must be entrywise nonnegative.

Finally, the sixth statement follows by noting that when \(\mathbf{T}\) is singular,  \(\mathbf{P}\mathbf{T}\) cannot be full rank for any stochastic \(\mathbf{P}\), and hence, cannot coincide with any optimal solution of \eqref{linprog} under the conditions of the statement 4 of Theorem 1. Therefore, \(p_e^*(\mathbf{T})>p_e^*\).
\end{proof}
\begin{corollary}
    For an invertible doubly stochastic matrix \(\mathbf{T}\), the utility can always be preserved in the high-privacy regime, which follows from the first statement of the previous theorem, since \(\mathbf{T}^{-1}\mathbf{1}=\mathbf{1}>0\). 

\end{corollary}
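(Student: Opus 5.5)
The corollary should follow from statement 3 of the preceding theorem, which guarantees utility preservation in the high-privacy regime whenever $\mathbf{1}^T\in\textnormal{int}(\textnormal{cone}_+(\mathbf{t}_1,\ldots,\mathbf{t}_M))$. (The corollary's own wording points to the first statement; the condition it invokes, $\mathbf{T}^{-1}\mathbf{1}=\mathbf{1}>0$, is in fact the positivity of the column sums of $\mathbf{T}^{-1}$ used in the proof of the third statement.) Here $\mathbf{t}_i$ denotes the $i$th row of $\mathbf{T}$, consistent with $\mathbf{T}^T\boldsymbol{\lambda}=\mathbf{1}$ in that proof. The plan is therefore to verify this interior-cone condition for an invertible doubly stochastic $\mathbf{T}$.

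Since $\mathbf{T}$ is doubly stochastic, its columns sum to one, so $\mathbf{T}^T\mathbf{1}=\mathbf{1}$. Equivalently, $\mathbf{1}^T=\sum_{i=1}^M 1\cdot\mathbf{t}_i$, which represents $\mathbf{1}^T$ with all coefficients $\lambda_i=1>0$. Because $\mathbf{T}$ is invertible, the rows $\mathbf{t}_1,\ldots,\mathbf{t}_M$ form a basis of $\mathds{R}^M$. The map $\boldsymbol{\lambda}\mapsto\mathbf{T}^T\boldsymbol{\lambda}$ is then a linear homeomorphism carrying the open orthant $\{\boldsymbol{\lambda}>0\}$ onto an open set contained in the cone. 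This open set contains $\mathbf{1}^T$, so $\mathbf{1}^T$ lies in the interior of the cone.

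For completeness, and to match the way the third statement is actually proved, I would also record the equivalent fact used there. Inverting $\mathbf{T}^T\mathbf{1}=\mathbf{1}$ gives $(\mathbf{T}^{-1})^T\mathbf{1}=\mathbf{1}$, so every column sum of $\mathbf{T}^{-1}$ equals $1$, and hence $s=1>0$. The argument of the third statement then yields $\epsilon_0,\delta_0>0$ such that $\mathbf{P}^*\mathbf{T}^{-1}\geq 0$ entrywise for all $\epsilon\leq\epsilon_0$ and $\delta\leq\delta_0$. In that regime $\mathbf{P}^*\mathbf{T}^{-1}$ is a stochastic matrix, because its rows sum to one by the identity $\mathbf{P}^*\mathbf{T}^{-1}\mathbf{1}=\mathbf{1}$ shown earlier. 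Consequently $p_e^*(\mathbf{T})=p_e^*$.

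There is no real obstacle in this argument. The only point needing care is matching the cone, which is defined on rows, with the column-sum condition on $\mathbf{T}^{-1}$; double stochasticity provides exactly the column-sum identity that links the two.
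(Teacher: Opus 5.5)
Your proposal is correct and follows the paper's intended route. The corollary is an instance of statement 3 of the preceding theorem; the paper's reference to the ``first statement'' is a slip. Double stochasticity gives $\mathbf{T}^T\mathbf{1}=\mathbf{1}$, which places $\mathbf{1}^T$ in the interior of $\textnormal{cone}_+(\mathbf{t}_1,\ldots,\mathbf{t}_M)$ and makes every column sum of $\mathbf{T}^{-1}$ equal to $1$. The identity you actually derive and need, $(\mathbf{T}^{-1})^T\mathbf{1}=\mathbf{1}$, is the right one. The literal $\mathbf{T}^{-1}\mathbf{1}=\mathbf{1}$ quoted in the corollary is a row-sum identity that holds for every invertible stochastic $\mathbf{T}$, so your reading of it as the column-sum condition is slightly generous.
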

We now proceed to derive lower and upper bounds on \(p_e^*(\mathbf{T})\).

For a given \(\mathbf{T}\), we define \(\mathbf{Q}_{\mathbf{T}}\) as the column-wise maximizer assignment matrix induced by \(\mathbf{T}\) as follows. It is a matrix with exactly one entry equal to one and \(M-1\) zeros in each row. In row \(i \in [1:M]\), the location of the unit entry is chosen to be the row index corresponding to a maximum entry in column \(i\) of \(\mathbf{T}\); if multiple maximizers exist, one is selected arbitrarily.
More specifically,
\begin{equation}\label{QT}
    [\mathbf{Q_T}]_{i,j}\triangleq\mathds{1}_{\{j\in\argmax_kt_{k,i}\}}.
\end{equation}
With this definition, for an arbitrary stochastic matrix $\mathbf{P}$, we have
\begin{equation}\label{gua}
   [\mathbf{PT}]_{i,i}\leq[\mathbf{Q_T T}]_{i,i},\ \forall i\in\mathcal{M},
\end{equation}
since the left hand side, being a convex combination of the entries in column $i$ of $\mathbf{T}$, is at most equal to its maximum entry, i.e., the right hand side.

As an example, we have
\begin{equation*}
    \mathbf{T}=\begin{bmatrix}
        .5&.4&.1\\
        .3&.2&.5\\
        .2&.2&.6
    \end{bmatrix}, \mathbf{Q_T}=\begin{bmatrix}
        1&0&0\\
        1&0&0\\
        0&0&1
    \end{bmatrix},\mathbf{Q_T T}=\begin{bmatrix}
        .5&.4&.1\\
        .5&.4&.1\\
        .2&.2&.6
    \end{bmatrix}
\end{equation*}
Define $l({\mathbf{T}})\triangleq 1-\gamma(\mathbf{Q_T T})$. From \eqref{gua}, we get a lower bound on $p_e^*(\mathbf{T})$ as follows.
\begin{align}
    p_e^*(\mathbf{T})&=1-\gamma(\mathbf{P_T^*T})\nonumber\\
    &\geq 1-\gamma(\mathbf{Q_T T})\nonumber\\
    &=l({\mathbf{T}}).\label{LB2}
\end{align}
Note that without any privacy constraints, i.e., \((\epsilon,\delta)=(\infty,1)\), we have \(\mathbf{P_T}^*=\mathbf{Q_T}\), which may not be equal to the identity matrix.

Therefore, combining \eqref{LB1} and \eqref{LB2}, we have 
\begin{equation}\label{LB3}
    p_e^*(\mathbf{T})\geq\max\{p_e^*,l(\mathbf{T})\}.
\end{equation}
As for the upper bounds, we proceed as follows. A trivial upper bound is
\begin{equation}\label{triv}
    p_e^*(\mathbf{T}) \leq u_0(\mathbf{T}) \triangleq 1 - \gamma(\mathbf{P}^*\mathbf{T}),
\end{equation}
which follows from the postprocessing property of differential privacy: since $\mathbf{P}^*$ already satisfies the DP constraints, so does $\mathbf{P}^*\mathbf{T}$.

In the sequel, we propose three upper bounds, based on convex mixing and spectral perturbation, that outperform the trivial bound in \eqref{triv} if $\mathbf{T}$ is either invertible or has positive entries.

In the first bound, we assume that \(\mathbf{T}\) is invertible, and define
\begin{align*}
    \lambda^*&\triangleq\max\{\lambda\in(0,1] | \lambda\mathbf{P}^*\mathbf{T}^{-1}+\Bar{\lambda}\mathbf{P}^*\geq \mathbf{0}\}\\
    & = \min_{i,j: [\mathbf{P^*T}^{-1}]_{i,j}<p^*_{i,j}}\frac{p^*_{i,j}}{p^*_{i,j}-[\mathbf{P^*T}^{-1}]_{i,j}},
\end{align*}
where \(\lambda^*>0\) since the entries of \(\mathbf{P^*}\) are strictly positive.

With this \(\lambda^*\), define the first suboptimal mapping as
\begin{align}
    \mathbf{G_T^{(1)}}\triangleq \lambda^*\mathbf{P}^*\mathbf{T}^{-1}+\Bar{\lambda^*}\mathbf{P}^*\label{GT}.
\end{align}

The rationale is as follows. If \(\mathbf{T}\) is invertible and \(\mathbf{P}^*\mathbf{T}^{-1}\) is not a valid probability transition matrix, we can perturb it to construct a feasible, albeit suboptimal, solution. To this end, since all the entries of \(\mathbf{P}^*\) are strictly positive, we can choose \(\lambda\in(0,1]\) such that
$\lambda\mathbf{P}^*\mathbf{T}^{-1}+\Bar{\lambda}\mathbf{P}^*$ has non-negative entries, and therefore, is a stochastic matrix, since all its rows sum to one.

Since \(\mathcal{S}_{\epsilon,\delta}\) is a convex set, \(\mathbf{G}_{\mathbf{T}}^{(1)}\mathbf{T}\), being a convex combination of two \((\epsilon,\delta)\)-DP mappings, also satisfies the \((\epsilon,\delta)\)-DP constraints. Therefore,
\(u_1(\mathbf{T}) \triangleq 1-\gamma(\mathbf{G_T^{(1)} T})
\) is an upper bound on $p_e^*(\mathbf{T})$. Moreover, we have
\begin{align*}
  u_1(\mathbf{T})&=  1-\gamma(\mathbf{G_T^{(1)} T})\\
  &=1-\gamma (\lambda^*\mathbf{P}^*+\Bar{\lambda^*}\mathbf{P^*T})\\
  &=1-\lambda^*\gamma(\mathbf{P}^*)-\Bar{\lambda^*}\gamma(\mathbf{P^*T})\nonumber\\
  &\leq 1-\gamma(\mathbf{P}^*\mathbf{T})\\
  &=u_0(\mathbf{T}),
\end{align*}
where we used the fact that all the diagonal entries of $\mathbf{P}^*$ are equal.



We proceed to obtain a second suboptimal mapping if all the entries of \(\mathbf{T}\) are nonzero. For a given \(\mathbf{T}\), we can decompose it as \(\mathbf{T}=\mathbf{T}_1\mathbf{T}_2\), where \(\mathbf{T}_1\) is invertible with rows summing to one and \(\mathbf{T}_2\) is a valid stochastic matrix. A trivial decomposition is obtained by taking \(\mathbf{T}_1 \triangleq \mathbf{I}\) and \(\mathbf{T}_2 \triangleq \mathbf{T}\). However, a nontrivial decomposition could be constructed by choosing \(\mathbf{T}_1\) as a perturbation of the identity, namely \(\mathbf{T}_1 \triangleq \mathbf{I}+ \mathbf{X}\), or \(\mathbf{T}_1 \triangleq (\mathbf{I}+ \mathbf{X})^{-1}\), for some carefully selected $\mathbf{X}$.

With the decomposition \(\mathbf{T}=\mathbf{T}_1\mathbf{T}_2\), where both \(\mathbf{P}^*\mathbf{T}_1^{-1}\) and \(\mathbf{T}_2\) remain stochastic, we can construct the suboptimal mapping \(\mathbf{P}\triangleq\mathbf{P}^*\mathbf{T}_1^{-1}\), for which \(\mathbf{P}\mathbf{T}=\mathbf{P}^*\mathbf{T}_2\). Since \(\mathbf{P}^*\mathbf{T}_2\) is obtained from \(\mathbf{P}^*\) via post-processing, by virtue of \(\mathbf{T}_2\) being stochastic, it continues to satisfy the required differential privacy constraints.


To formalize this intuition, we set \(\mathbf{T}_1\triangleq(\mathbf{I}+\mathbf{X})^{-1}\), with the following constraints 
\begin{enumerate}
    \item \(\mathbf{I}+\mathbf{X}\) is invertible,
    \item \(\mathbf{P}^*(\mathbf{I}+\mathbf{X})\) is stochastic,
    \item \((\mathbf{I}+\mathbf{X})\mathbf{T}\) is stochastic.
\end{enumerate}
The first condition is equivalent to requiring that none of the eigenvalues of \(\mathbf{X}\) equals \(-1\).

For the second condition, we require \(\mathbf{P}^*(\mathbf{I}+\mathbf{X}) \geq 0\) entrywise and
\(\mathbf{P}^*(\mathbf{I}+\mathbf{X})\mathbf{1}=\mathbf{1}\), where the latter is equivalent to \(\mathbf{X}\mathbf{1}=\mathbf{0}\). 

Finally, the third condition is equivalent to requiring that \((\mathbf{I}+\mathbf{X})\mathbf{T} \geq 0\) entrywise, since its rows already sum to one.

We relax this problem in several steps. First, we restrict to a rank-one perturbation, i.e., we set $\mathbf{X} \triangleq \theta \mathbf{u}\mathbf{v}^\top$, where $\theta \in \mathbb{R}$ and $\mathbf{u}, \mathbf{v} \in \mathbb{R}^n$ satisfy $\|\mathbf{u}\|_2 = \|\mathbf{v}\|_2 = 1$ and $\mathbf{v}^\top \mathbf{1} = 0$.

Since \(\theta\mathbf{u_1v_1^T}\) has only one potentially nonzero eigenvalue, namely $\theta\mathbf{v}_1^T\mathbf{u}_1$, the first condition on the invertibility is equivalent to $\theta\mathbf{v}_1^T\mathbf{u}_1\neq -1$. 
Moreover, the second and third conditions are equivalent to the following entrywise inequalities 
\begin{align}
    \mathbf{P}^*+\theta(\mathbf{P}^*\mathbf{u})\mathbf{v}^T&\geq\mathbf{0}\label{heu1}\\
    \mathbf{T}+\theta\mathbf{u}(\mathbf{v}^T\mathbf{T})&\geq\mathbf{0}\label{heu2}
\end{align}

The quantity of interest that we seek to increase with a proper choice of feasible \(\theta,\mathbf{u},\mathbf{v}\) is
\begin{equation*}
    \gamma(\mathbf{P}^*\mathbf{T}_2)=\gamma(\mathbf{P}^*\mathbf{T}+\theta\mathbf{P^*uv^T T}).
\end{equation*}
In what follows, we force \(\theta\mathbf{P^*uv^T T}\) to have non-negative diagonal entries.

Since \(\mathbf{P}^*\) is invertible according to Theorem 1, we can set \(\mathbf{u}\triangleq \mathbf{P^*}^{-1}\mathbf{T}^T\mathbf{v}\). If we also impose that \(\theta\geq 0\), we get
\begin{align}
    \gamma(\mathbf{P}^*\mathbf{T}_2)&=\gamma(\mathbf{P}^*\mathbf{T}+\theta\mathbf{P^*}\mathbf{P^*}^{-1}\mathbf{T}^T\mathbf{vv^T T})\nonumber\\
    &=\gamma\left(\mathbf{P}^*\mathbf{T}+\theta\mathbf{T}^T\mathbf{v}(\mathbf{T}^T\mathbf{v})^T\right)\nonumber\\
    &\geq \gamma(\mathbf{P}^*\mathbf{T})+\theta\gamma\left(\mathbf{T}^T\mathbf{v}(\mathbf{T}^T\mathbf{v})^T\right)\label{reason}\\
    &\geq \gamma(\mathbf{P}^*\mathbf{T})\label{reason2},
\end{align}
where \eqref{reason} follows from the superadditivity of \(\gamma(\cdot)\) and \eqref{reason2} follows from two observations, namely, i) \(\theta\geq 0\) and ii) \(\gamma(\mathbf{xx^T})\geq 0\) for any \(\mathbf{x}\in\mathds{R}^M\).

Looking at \eqref{reason}, we can select $\mathbf{v}$ as the vector maximizing $\|\mathbf{T}^\top \mathbf{v}\|_2$, subject to the constraint that it is orthogonal to $\mathbf{1}$. To this end, we define
\begin{equation*}
    \mathcal{H}\triangleq\{\mathbf{x}\in\mathbb{R}^M \mid \mathbf{x}^\top\mathbf{1}=0\},
\end{equation*}
and let $\mathbf{H}$ denote the orthogonal projector onto $\mathcal{H}$, given by
\begin{equation*}
    \mathbf{H} \triangleq \mathbf{I}-\frac{1}{M}\mathbf{1}\mathbf{1}^\top.
\end{equation*}

Since $\mathbf{v}$ is required to lie in $\mathcal{H}$, it is invariant under projection, i.e., \(\mathbf{H}\mathbf{v}=\mathbf{v}\).
Therefore,
\begin{align*}
\max_{\substack{\mathbf{v}: \|\mathbf{v}\|_2=1\\
\mathbf{v}^\top \mathbf{1}=0}}
\|\mathbf{T}^\top \mathbf{v}\|_2
&=
\max_{\substack{\mathbf{v}: \|\mathbf{v}\|_2=1\\\ \mathbf{v}\in\mathcal{H}}}
\|\mathbf{T}^\top \mathbf{v}\|_2 \\
&=
\max_{\substack{\mathbf{v}: \|\mathbf{v}\|_2=1}}
\|\mathbf{T}^\top \mathbf{H}\mathbf{v}\|_2.
\end{align*}
Hence, we take $\mathbf{v}$ as the right singular vector of $\mathbf{T}^\top \mathbf{H}$ corresponding to its largest singular value. 

Let \(\mathbf{A}\triangleq -\mathbf{T^Tvv^T}\) and \(\mathbf{B}\triangleq-\mathbf{{P^*}^{-1}T^Tvv^T T}\). The entrywise inequalities in \eqref{heu1} and \eqref{heu2} reduce to $\theta\in[0,\theta_R]$, where
\begin{align*}
    \theta_R
    &\triangleq
    \min\left\{
        \min_{i,j:\,a_{i,j}>0}\frac{p^*_{i,j}}{a_{i,j}},
        \min_{i,j:\,b_{i,j}>0}\frac{t_{i,j}}{b_{i,j}}
    \right\},
\end{align*}
with the convention that a minimum over an empty set equals ($+\infty$). We note that a sufficient condition for the feasibility of a nontrivial \(\theta\), i.e., \(\theta>0\), for this scheme is that all the entries of \(\mathbf{T}\) be positive.
We set \(\theta^*\triangleq \theta_R\) if \(1+\theta_R\mathbf{v^T u}\neq 0\), otherwise, \(\theta^*\triangleq\theta_R-\zeta\) for sufficiently small \(\zeta>0\).

Finally, the second suboptimal mapping is given by
\begin{equation}
    \mathbf{G_T^{(2)}}\triangleq \mathbf{P}^*+\theta^*\mathbf{T^T vv}^T, 
\end{equation}
which results in
\begin{align*}
    p_e^*(\mathbf{T})&\leq u_2(\mathbf{T})\triangleq 1-\gamma(\mathbf{G_T^{(2)}}\mathbf{T})\\
    &\leq u_0(\mathbf{T}).
\end{align*}


We can derive a third suboptimal scheme if all the entries of \(\mathbf{T}\) are nonzero. We set 
\begin{equation}\label{const3}
    \mathbf{T}=(\beta\mathbf{P^*}^{-1}\mathbf{Q_T}+\Bar{\beta}\mathbf{I})^{-1}\mathbf{T}_2,
\end{equation}
where \(\mathbf{Q_T}\) is given in \eqref{QT}.

Denoting the eigenvalues of \(\mathbf{P^*}^{-1}\mathbf{Q_T}\) by \(\lambda_i(\mathbf{P^*}^{-1}\mathbf{Q_T})\), the constraints on \(\beta\) are
\begin{enumerate}
    \item \(\frac{\beta-1}{\beta}\neq\lambda_i(\mathbf{P^*}^{-1}\mathbf{Q_T}),\ \forall i\), for invertibility,
    \item \(\beta\mathbf{P^*}^{-1}\mathbf{Q_T T}+\Bar{\beta}\mathbf{T}\geq 0\) for stochasticity.
\end{enumerate}
Obviously, when the entries of \(\mathbf{T}\) are positive, for sufficiently small \(\beta>0\), the above constraints are satisfied.

With the decomposition in \eqref{const3}, we get
\begin{align}
    \gamma(\mathbf{P}^*\mathbf{T}_2)
    &= \gamma\!\left(\beta \mathbf{Q}_T \mathbf{T}
    + \bar{\beta}\,\mathbf{P}^* \mathbf{T}\right)\nonumber\\
    &\geq \beta \gamma(\mathbf{Q}_T \mathbf{T})
    + \bar{\beta}\,\gamma(\mathbf{P}^* \mathbf{T})
    \label{rea1}\\
    &\geq \beta \gamma(\mathbf{P}^*\mathbf{T})
    + \bar{\beta}\,\gamma(\mathbf{P}^* \mathbf{T})
    \label{rea2}\\
    &= \gamma(\mathbf{P}^* \mathbf{T}) \nonumber,
\end{align}
where \eqref{rea1} follows from the supperadditivity of \(\gamma\), and \eqref{rea2} from \eqref{gua}.

Let \(\mathbf{C}\triangleq(\mathbf{I}-\mathbf{P^*}^{-1}\mathbf{Q_T})\mathbf{T}\), and define
\begin{equation*}
    \beta_R\triangleq \min_{i,j: c_{i,j}>0}\frac{t_{i,j}}{c_{i,j}}.
\end{equation*}
The feasible set for \(\beta\) becomes 
\[
[0,\beta_R]\backslash\left\{\beta\ \bigg|\ \frac{\beta-1}{\beta}=\lambda_i(\mathbf{P^*}^{-1}\mathbf{Q_T}),\ \textnormal{for some } i\right\}.
\]
We set \(\beta^*\triangleq \beta_R\) if \(\frac{\beta_R-1}{\beta_R}\neq\lambda_i(\mathbf{P^*}^{-1}\mathbf{Q_T})\ \forall i\), otherwise, \(\beta^*\triangleq\beta_R-\zeta\) for sufficiently small \(\zeta>0\).

Finally, the third suboptimal mapping is given by
\begin{equation}
    \mathbf{G_T^{(3)}}\triangleq \beta^*\mathbf{Q_T}+\bar{\beta^*}\mathbf{P}^* , 
\end{equation}
which results in
\begin{align*}
    p_e^*(\mathbf{T})&\leq u_3(\mathbf{T})\triangleq 1-\gamma(\mathbf{G_T^{(3)}}\mathbf{T})\\
    &\leq u_0(\mathbf{T}).
\end{align*}

The following proposition summarizes the bounds derived in this section.

\begin{proposition}
    For a given $\mathbf{T}$, we have
    \begin{align}
        \max\{p_e^*,l(\mathbf{T})\}\leq p_e^*(\mathbf{T})&\leq \min\{u_1(\mathbf{T}),u_2(\mathbf{T}),u_3(\mathbf{T})\}.
    \end{align}
\end{proposition}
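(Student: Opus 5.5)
The proposition only collects bounds established earlier in this section, so the proof is an assembly of two independent halves. For the lower bound, I will invoke \eqref{LB3}, which combines \eqref{LB1} and \eqref{LB2}.

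The bound $p_e^*\le p_e^*(\mathbf{T})$ holds for the following reason. Any $\mathbf{P}$ with $\mathbf{PT}\in\mathcal{S}_{\epsilon,\delta}$ makes $\mathbf{PT}$ a stochastic matrix feasible for \eqref{linprog}. Hence $\gamma(\mathbf{PT})\le 1-p_e^*$.

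The bound $l(\mathbf{T})\le p_e^*(\mathbf{T})$ follows from \eqref{gua}. Each diagonal entry $[\mathbf{P}_{\mathbf{T}}^*\mathbf{T}]_{i,i}$ is a convex combination of column $i$ of $\mathbf{T}$. It is therefore at most $\max_k t_{k,i}=[\mathbf{Q_TT}]_{i,i}$. Taking the minimum over $i$ gives $\gamma(\mathbf{P}_{\mathbf{T}}^*\mathbf{T})\le\gamma(\mathbf{Q_TT})$.

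For the upper bound, it suffices to exhibit, for each $k\in\{1,2,3\}$, a stochastic matrix $\mathbf{G}_{\mathbf{T}}^{(k)}$ with $\mathbf{G}_{\mathbf{T}}^{(k)}\mathbf{T}\in\mathcal{S}_{\epsilon,\delta}$. Then $p_e^*(\mathbf{T})\le 1-\gamma(\mathbf{G}_{\mathbf{T}}^{(k)}\mathbf{T})=u_k(\mathbf{T})$, and taking the minimum over $k$ finishes the proof. I will check each mapping in turn.

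\emph{First mapping.} $\mathbf{G}_{\mathbf{T}}^{(1)}$ is entrywise nonnegative by the choice of $\lambda^*$, and its rows sum to one because $\mathbf{T}^{-1}\mathbf{1}=\mathbf{1}$. Moreover $\mathbf{G}_{\mathbf{T}}^{(1)}\mathbf{T}=\lambda^*\mathbf{P}^*+\bar{\lambda^*}\mathbf{P}^*\mathbf{T}$. This is a convex combination of $\mathbf{P}^*\in\mathcal{S}_{\epsilon,\delta}$ and its post-processing $\mathbf{P}^*\mathbf{T}\in\mathcal{S}_{\epsilon,\delta}$. Convexity of $\mathcal{S}_{\epsilon,\delta}$ then gives feasibility.

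\emph{Second and third mappings.} The shared mechanism is the factorization $\mathbf{T}=\mathbf{T}_1\mathbf{T}_2$, under which $\mathbf{G}\mathbf{T}=\mathbf{P}^*\mathbf{T}_2$. This product is DP by post-processing whenever $\mathbf{T}_2$ is stochastic, and $\mathbf{G}$ must itself be stochastic.
\begin{itemize}
\item For $\mathbf{G}_{\mathbf{T}}^{(2)}$, take $\mathbf{T}_2=(\mathbf{I}+\theta^*\mathbf{u}\mathbf{v}^T)\mathbf{T}$ with $\mathbf{u}=\mathbf{P^*}^{-1}\mathbf{T}^T\mathbf{v}$. Here $\mathbf{P}^*$ is invertible by statement 1 of Theorem~\ref{optimalmapping}. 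Rows sum to one since $\mathbf{v}^T\mathbf{1}=0$. Nonnegativity of both factors is exactly \eqref{heu1}--\eqref{heu2}, which holds for $\theta^*\le\theta_R$.
\item For $\mathbf{G}_{\mathbf{T}}^{(3)}=\beta^*\mathbf{Q_T}+\bar{\beta^*}\mathbf{P}^*$, the matrix is a convex combination of stochastic matrices, hence stochastic. In this case $\mathbf{T}_2=\beta^*\mathbf{P^*}^{-1}\mathbf{Q_TT}+\bar{\beta^*}\mathbf{T}$, which is nonnegative for $\beta^*\le\beta_R$ and has unit row sums.
\end{itemize}

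The main point needing care is that $\mathbf{T}_1$ must be invertible in each case. This is ensured by the perturbations $\theta_R-\zeta$ and $\beta_R-\zeta$, which avoid the finitely many forbidden values while keeping nonnegativity. The key observation here is that nonnegativity is a closed condition, whereas hitting a forbidden value happens only at isolated points.

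When $\mathbf{T}$ lacks the required invertibility or positivity, I will interpret the corresponding $u_k$ as the trivial bound $u_0(\mathbf{T})$, which is valid by post-processing. Alternatively, I will omit that $u_k$ from the minimum.
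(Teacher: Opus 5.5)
Your lower bounds, and your treatment of $\mathbf{G}_{\mathbf{T}}^{(1)}$ and $\mathbf{G}_{\mathbf{T}}^{(2)}$, follow the paper's derivation and are correct. The gap is your claim that $\mathbf{G}_{\mathbf{T}}^{(3)}=\beta^*\mathbf{Q_T}+\bar{\beta^*}\mathbf{P}^*$ is a convex combination of stochastic matrices. That needs $\beta^*\le 1$, and the definition of $\beta^*$ (namely $\beta_R$ or $\beta_R-\zeta$) does not provide it. $\beta_R$ only encodes nonnegativity of $\mathbf{T}_2=\mathbf{T}-\beta\mathbf{C}$, and the invertibility exclusions do not cap it either. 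If $\beta^*>1$, every entry of $\mathbf{G}_{\mathbf{T}}^{(3)}$ where $\mathbf{Q_T}$ vanishes equals $(1-\beta^*)p^*_{i,j}<0$. Then $\mathbf{G}_{\mathbf{T}}^{(3)}$ is not a mechanism, and $p_e^*(\mathbf{T})\le u_3(\mathbf{T})$ is unjustified. This does happen: take $M=2$, $\delta=0$, $q\triangleq e^{-\epsilon}=0.01$ and
\[
\mathbf{T}=\begin{bmatrix}0.9&0.1\\0.2&0.8\end{bmatrix},\qquad \mathbf{Q_T}=\mathbf{I},\qquad \mathbf{C}=\frac{0.7q}{1-q}\begin{bmatrix}-1&1\\1&-1\end{bmatrix}.
\]
Then $\beta_R=\frac{0.1(1-q)}{0.7q}\approx 14.1$. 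No value in $(0,\beta_R]$ is excluded, because the eigenvalues of $(\mathbf{P}^*)^{-1}$ are $1$ and $\frac{1+q}{1-q}$, while $\frac{\beta-1}{\beta}<1$ for $\beta>0$; so $\beta^*=\beta_R$. One finds $\mathbf{T}_2=\mathbf{T}-\beta_R\mathbf{C}=\begin{bmatrix}1&0\\0.1&0.9\end{bmatrix}$ and $\gamma(\mathbf{G}_{\mathbf{T}}^{(3)}\mathbf{T})=\gamma(\mathbf{P}^*\mathbf{T}_2)=\frac{0.9}{1.01}$, so $u_3(\mathbf{T})\approx 0.109$. But $[\mathbf{P}\mathbf{T}]_{2,2}\le 0.8$ for every stochastic $\mathbf{P}$, so $p_e^*(\mathbf{T})\ge l(\mathbf{T})=0.2>u_3(\mathbf{T})$. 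The paper's construction has the same omission: it constrains $\beta$ only through invertibility and nonnegativity of $\mathbf{T}_2$. So the proposition, with $u_3$ as defined, holds only after this is fixed.

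The repair is to restrict $\beta^*$ to $[0,\min\{\beta_R,1\}]$. Then $\mathbf{G}_{\mathbf{T}}^{(3)}$ is a genuine convex combination, $\mathbf{T}_2\geq\mathbf{0}$ still holds, and your argument goes through unchanged. Also, the point you single out as needing care, invertibility of $\mathbf{T}_1$, plays no role in the validity of $u_2$ or $u_3$. For any stochastic $\mathbf{G}$, setting $\mathbf{T}_2\triangleq(\mathbf{P}^*)^{-1}\mathbf{G}\mathbf{T}$ gives $\mathbf{G}\mathbf{T}=\mathbf{P}^*\mathbf{T}_2$ identically, with unit row sums automatically. All that is needed is $\mathbf{T}_2\geq\mathbf{0}$, stochasticity of $\mathbf{G}$ itself, and invertibility of $\mathbf{P}^*$ (which requires $(\epsilon,\delta)\neq(0,0)$). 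Stochasticity of $\mathbf{G}$ is exactly what you verified for $\mathbf{G}_{\mathbf{T}}^{(2)}$ via \eqref{heu1} but took for granted for $\mathbf{G}_{\mathbf{T}}^{(3)}$.
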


\section{Numerical results}
In what follows, we consider a specific scenario in which the messages are transmitted over an AWGN channel with uncoded $M$-ary phase-shift keying (MPSK) modulation.

In this context, the $M$ messages are mapped to $M$ equally spaced points on a circle of radius $\sqrt{E_s}$ in the complex plane, where $E_s$ denotes the symbol energy. Complex zero-mean Gaussian noise, with equal variance in the in-phase and quadrature components, is added to these points, yielding the received noisy symbols at the receiver. A nearest-neighbor detector is then employed to recover the transmitted message.

Except for the cases $M=2$ and $M=4$, the transition probability matrix of $M$-PSK, i.e., $\mathbf{T}$, does not admit a closed-form expression and must be computed numerically.

Let \(\delta=0\). As an example, for the message set \(\mathcal{M}=[1:8]\) and \(\mathbf{T}\) denoting the probability transition matrix of 8-PSK signaling over an AWGN channel, the curve of \(p_e^*(\mathbf{T})\) versus \(\epsilon\) is plotted in Fig.~\ref{fig_8PSK} for several SNR values, obtained by solving the corresponding LP. For comparison, the value of \(p_e^*\) in \eqref{pestar} (with \(M=8\) and \(\delta=0\)) is also shown in brown with triangular markers.

Furthermore, the upper bound in Proposition~1 is plotted using dashed lines. The lower bound \(l(\mathbf{T})\) is omitted from the figure, as it coincides with the horizontal asymptotes at which the curves saturate.

As can be observed, increasing the SNR brings the curves closer to \(p_e^*\), with the heuristic justification that, as the SNR increases, \(\mathbf{T}\) and \(\mathbf{P}^*_{\mathbf{T}}\) become closer to the identity matrix \(\mathbf{I}\) and \(\mathbf{P}^*\), respectively.

An important observation is that, in the high-privacy regime (i.e., small \(\epsilon\)), \(p_e^*(\mathbf{T})\) coincides with \(p_e^*\), implying that uncoded transmission is optimal in this regime. This is consistent with the third statement of theorem 2 due to the structure of $\mathbf{T}$, i.e., being doubly stochastic. Furthermore, one can derive bounds on the SNR required to achieve a given privacy level. For example, when \(\epsilon \leq 2\), there is no gain in reducing \(p_e^*(\mathbf{T})\) by increasing the SNR beyond \(10\,\mathrm{dB}\). In other words, when $p_e^*(\mathbf{T})$ meets $p_e^*$, the channel is good enough and further investment in reduction of the noise does not improve the worst-case error.

Finally, we observe similar behaviour at other values of \(\delta\); hence, the corresponding figures are omitted for brevity.

\begin{figure}[H]
 \centering 
 \scalebox{0.5} 
 {\includegraphics{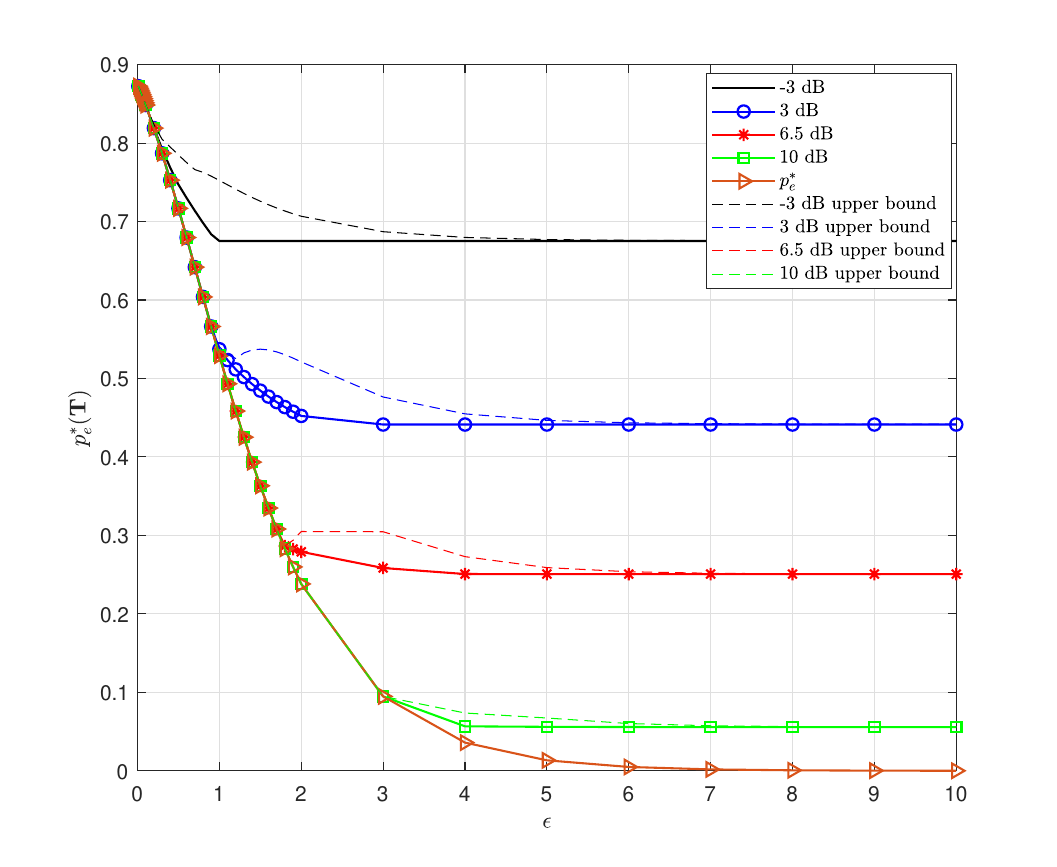}} 
 \caption{An example with $M=8,\delta=0$ and $\mathbf{T}$ capturing the probability transition of 8-PSK modulation in AWGN channel.}
 \label{fig_8PSK} 
\end{figure}

\section{Conclusion}

This paper studied the consistent release of counting queries under \((\epsilon,\delta)\)-differential privacy from a worst-case communication perspective. We derived a closed-form expression for the minimum achievable error probability and obtained an explicit canonical optimal mechanism. By exploiting the active privacy constraints, we further characterized the entire class of optimal mechanisms, identifying conditions for uniqueness, full rank, and the structural properties shared by all optimizers.

We also extended the framework to the setting where part of the communication medium is fixed. For this general model, we established necessary and sufficient conditions under which no utility loss occurs, derived upper and lower bounds on the optimal performance, and illustrated the results through (M)-ary PSK transmission over an AWGN channel, where uncoded transmission is shown to be effectively optimal in the high-privacy regime.

\appendices
\section{Proof of lemma \ref{lem1}}\label{App1}
Let $q\in (0,1)$, and define the matrix 
\(\mathbf{Z} \triangleq [q^{d_M(i,j)}].
\)
Since $z_{ij}$ depends only on $i-j \pmod M$, $\mathbf{Z}$ is a circulant matrix. As a result, we check its Fourier diagonalization.

The eigenvalues of this circulant matrix are
\[
\lambda_k = \sum_{r=0}^{M-1} q^{d_M(r,0)} e^{-2\pi i kr/M},
\]
and by symmetry, this becomes a real cosine sum.

\subsection*{Case 1: $M=2m+1$}

For $M=2m+1$, we have
\[
d(r,0)=
\begin{cases}
r, & 0 \le r \le m,\\
M-r, & m<r<M.
\end{cases}
\]
Hence
\[
\lambda_k
=
1 + 2 \sum_{r=1}^{m} q^r \cos\!\left(\frac{2\pi kr}{M}\right).
\]

Set $\theta \triangleq \frac{2\pi k}{M}$. Then
\[
\lambda_k
=
1 + 2 \sum_{r=1}^{m} q^r \cos(r\theta).
\]
Using the standard finite cosine sum identity\cite{grad},
\begin{align*}
&1 + 2 \sum_{r=1}^{m} q^r \cos(r\theta)
=\\
&\frac{
1 - q^2 - 2 q^{m+1} \cos((m+1)\theta)
+ 2 q^{m+2} \cos(m\theta)
}{
1 - 2q\cos\theta + q^2
}.
\end{align*}
Since $M=2m+1$, we have the identity
\[
\cos((m+1)\theta)=\cos(m\theta),
\]
so the numerator simplifies to
\[
1 - q^2 - 2 q^{m+1}(1-q)\cos(m\theta).
\]
Thus,
\[
\lambda_k
=
\frac{
(1-q)\bigl(1+q - 2 q^{m+1}\cos(m\theta)\bigr)
}{
1 - 2q\cos\theta + q^2
}.
\]
Since $| \cos(m\theta) | \le 1$ and $q^{m+1} < 1$, we obtain
\[
1+q - 2 q^{m+1}\cos(m\theta) > 0,
\]
and since the denominator is positive, $\lambda_k > 0$.

\subsection*{Case 2: $M=2m$}

We have
\[
\lambda_k
=
1 + 2 \sum_{r=1}^{m-1} q^r \cos(r\theta) + q^m (-1)^k,
\  \theta = \frac{\pi k}{m}.
\]

Using the finite cosine sum identity,
\begin{align*}
    &1 + 2 \sum_{r=1}^{m-1} q^r \cos(r\theta)
=\\
&\frac{
1 - q^2 - 2 q^m \cos(m\theta)
+ 2 q^{m+1} \cos((m-1)\theta)
}{
1 - 2q\cos\theta + q^2
}.
\end{align*}

Using \(\cos(m\theta)=(-1)^k,
\cos((m-1)\theta)=(-1)^k \cos\theta,
\)
we obtain
\[
\lambda_k
=
\frac{
1 - q^2 - (1-q)^2 q^m (-1)^k
}{
1 - 2q\cos\theta + q^2
}.
\]

Since the denominator is strictly positive and
\[
1 - q^2 - (1-q)^2 q^m (-1)^k > 0,
\]
it follows that $\lambda_k > 0$ for all $k$.

\section{}\label{App2}

\textbf{Generalization to arbitrary $M$.} Let \(F_i^k\) and \(L_i^k\) denote the sums of the first \(k\) and last \(k\) entries, respectively, of row \(i\) of \(\mathbf{P}^*\) in (\ref{optimizer}). Here, \(F\) and \(L\) stand for “first” and “last.”

For any $i\leq\lfloor\frac{M}{2}\rfloor$, we have
\begin{align}
    F_i^i&\triangleq\sum_{j=1}^ip^*_{i,j}\nonumber\nonumber\\
    & = \alpha\sum_{j=1}^i e^{-\epsilon d_M(i,j)} +\delta \nonumber\\
    &= \alpha e^{\epsilon}\sum_{j=1}^i e^{-\epsilon d_M(i+1,j)} +\delta\label{fact}\\
    & = e^\epsilon\sum_{j=1}^ip^*_{i+1,j}+\delta\nonumber\\
    &=e^\epsilon F_{i+1}^i+\delta,\nonumber
\end{align}
where the factorization of $e^\epsilon$ in (\ref{fact}) follows from the condition $1\leq j\leq i\leq\lfloor\frac{M}{2}\rfloor$. Due to the circulant symmetry of $\mathbf{P}^*$, we can do an analogous reasoning to $L_{M-i+1}^i$, and in conclusion, we obtain
\begin{align}
F_{i}^i&=e^\epsilon F_{i+1}^i+\delta\nonumber\\
   L_{M-i+1}^i&=e^\epsilon L_{M-i}^i+\delta,\ \ \ \ \ \forall i\leq\lfloor\frac{M}{2}\rfloor\label{Finc}
\end{align}
which are active constraints.

We follow the same reasoning as in the proof for the case \(M=5\). Suppose there exists \(\mathbf{P}' \in \mathcal{S}_{\epsilon,\delta}\) whose diagonal entries are strictly larger than those of \(\mathbf{P}^*\). Consider the evolution of \(\mathbf{P}^*\) toward \(\mathbf{P}'\). Along this direction, we observe an inevitable cascade of increments in the entries of \(\mathbf{P}^*\), as detailed below:

\begin{enumerate}
    \item $p_{i,i}^*$ increases $\forall i$.
    \item $F_i^i$ increases for $i\leq \lfloor\frac{M}{2}\rfloor.$ This follows from the previous step and induction, by noting that i) $F_{i+1}^{i+1} = F_{i+1}^i+p^*_{i+1,i+1}$ from definition, and ii)  $F_i^i\uparrow\Longrightarrow F_{i+1}^i\uparrow$ from (\ref{Finc}).
    \item Likewise, $L_{M-i+1}^i$ increases for $i\leq \lfloor\frac{M}{2}\rfloor.$
\end{enumerate}
These incremental effects force the sum of row \(\frac{M+1}{2}\) when \(M\) is odd, or the sums of rows \(\frac{M}{2}\) and \(\frac{M}{2}+1\) when \(M\) is even, to exceed one. Consequently, \(\mathbf{P}'\) cannot be a valid probability transition matrix, and hence \(\mathbf{P}^*\) is optimal. 

\textbf{Proof of the second part in statement 3.} Assume, to the contrary, that one of the diagonal entries in the upper-left half of an optimizer $\mathbf{P}$ satisfies $p_{i,i}>p^*_{i,i}$. (The argument for diagonal entries in the lower-right half is entirely analogous.) Also, assume \(M\) is odd, and the proof for even \(M\) is similar.

From the active constraints in \eqref{Finc}, we observe that $\sum_{j=1}^{i-1}p_{i,j}$ cannot be smaller than $F_{i-1}^i$, otherwise, moving upwards, we get \(p_{1,1}<p^*_{1,1}\), which contradicts optimality of \(\mathbf{P}\). Hence, we must have 
\begin{equation*}
    \sum_{j=1}^{i}p_{i,j}>F_i^i,
\end{equation*}
and propagating downwards, we get
\begin{equation*}
    \sum_{j=1}^{\frac{M+1}{2}}p_{\frac{M+1}{2},j}>F_{\frac{M+1}{2}}^{\frac{M+1}{2}}.
\end{equation*}
However, since each row sums to one, we must have
\begin{equation*}
    \sum_{j=\frac{M+1}{2}+1}^{M}p_{\frac{M+1}{2},j}<L_{\frac{M+1}{2}}^{\lfloor\frac{M}{2}\rfloor}.
\end{equation*}
Propagating this change downward through the active $\epsilon$-DP constraints in \eqref{Finc} then yields
\[
p_{M,M}<p^*_{M,M},
\]
contradicting the optimality. Therefore, no diagonal entry can exceed its counterpart in $\mathbf{P}^*$.

Finally, looking at the cascade of increments or decrements in this propagation argument, one can readily verify that for any $\mathbf{P}\in\mathcal{P}_{\textnormal{opt}}$,
\begin{align*}
    \sum_{j=1}^ip_{i,j}&=\sum_{j=1}^ip^*_{i,j},\ \ i\leq\lceil\frac{M+1}{2}\rceil\nonumber\\
    \sum_{j=i}^Mp_{i,j}&=\sum_{j=i}^Mp^*_{i,j},\ \ i\geq\lceil\frac{M+1}{2}\rceil.
\end{align*}
In each case, the assumption to the contrary leads to either \(p_{1,1}<p^*_{1,1}\) or \(p_{M,M}<p^*_{M,M}\), which contradict the optimality of $\mathbf{P}$.

\section{Proof of lemma \ref{lem2}}\label{App3}
Obviously, we have
\begin{align}
\Delta_M(\epsilon,\delta)
&=
\sup_{\mathbf{P}\in\mathcal{P}_{\mathrm{opt}}}
\|\mathbf{P}-\mathbf{P}^*\|_{\max}\nonumber\\
&\leq \frac{1}{2}
\sup_{\mathbf{P}\in\mathcal{P}_{\mathrm{opt}}}
\|\mathbf{P}-\mathbf{P}^*\|_{\infty}\label{sum0}\\
&\leq (1-\delta)\bigl(1-\alpha_M(\epsilon)\bigr),\label{mosht}
\end{align}
where \eqref{sum0} follows from the fact that every row of
$\mathbf{P}-\mathbf{P}^*$ sums to zero, while \eqref{mosht} follows from
Theorem~1 (statement 3), which states that all optimal mechanisms share the same
diagonal entries as $\mathbf{P}^*$.

Although the bound in \eqref{mosht} admits a simple closed-form expression,
it is rather loose in the high-privacy regime, namely as
$\epsilon,\delta\to0$. Indeed, substituting $(\epsilon,\delta)=(0,0)$ into
\eqref{mosht} yields
\[
\Delta_M(0,0)\leq\frac{M-1}{M},
\]
whereas, by the uniqueness of the optimal solution,
\[
\Delta_M(0,0)=0.
\]
To address this concern, for any \(\mathbf{P}\in\mathcal{P}_{\textnormal{opt}}\), define
\begin{align*}
    \Phi(\mathbf{P})&\triangleq\max_{i,j: i\neq j} p_{i,j}\\
    \phi(\mathbf{P})&\triangleq\min_{i,j: i\neq j} p_{i,j}
\end{align*}
Since \(\mathbf{P,P^*}\) share the same diagonal entries, we have
\begin{align*}
    \|\mathbf{P}-\mathbf{P}^*\|_{\textnormal{max}}\leq\max\{\Phi(\mathbf{P})-\phi(\mathbf{P}^*),\Phi(\mathbf{P}^*)-\phi(\mathbf{P})\}.
\end{align*}
We already know that
\begin{align*}
    \Phi(\mathbf{P}^*)&=(1-\delta)\alpha_M(\epsilon)e^{-\epsilon}\\
    \phi(\mathbf{P}^*)&=(1-\delta)\alpha_M(\epsilon)e^{-\lfloor\frac M2\rfloor\epsilon}
\end{align*}
as the maximum and minimum off-diagonal entries in \(\mathbf{P^*}\). It remains to find an upper bound and a lower bound for $\Phi(\mathbf{P})$ and $\phi(\mathbf{P})$, respectively.

We use a simple bound based on the known diagonal entries in \(\mathbf{P}\) and consecutive application of DP constraints.

As a result, for any optimizer, we get
\begin{align*}
    \Phi(\mathbf{P})&\leq\Phi_0\triangleq\min\{\overline{\Phi},(1-\alpha_M(\epsilon))(1-\delta)\}\\
    \phi(\mathbf{P})&\geq\phi_0\triangleq\max\{\underline{\phi},0\}
\end{align*},
with 

\begin{align*}
    \overline{\Phi}&\triangleq e^{(M-1)\epsilon}(\delta+(1-\delta)\alpha_M(\epsilon))+\delta\frac{1-e^{(M-1)\epsilon}}{1-e^\epsilon}\\
    \underline{\phi}&\triangleq\frac{\delta+(1-\delta)\alpha_M(\epsilon)-\delta\frac{1-e^{(M-1)\epsilon}}{1-e^\epsilon}}{e^{(M-1)\epsilon}}.
\end{align*}
Hence, we have
\begin{align*}
   \sup_{\mathbf{P}\in\mathcal{P}_{\mathrm{opt}}} \|\mathbf{P}-\mathbf{P}^*\|_{\textnormal{max}}&\leq\max\{\Phi_0-\phi(\mathbf{P}^*),\Phi(\mathbf{P}^*)-\phi_0\}\\
   &\triangleq R_M(\epsilon,\delta).
\end{align*}
Finally, it can be verified that when \(\epsilon,\delta\to 0\), we have \(R_M(\epsilon,\delta)\to 0\).

\section{}\label{App4}
We have
\begin{equation*}
    f(u,v)\triangleq (k+\delta')^3-(k+\delta')\bigl(uv+4-(u+v)\bigr)-2\bigl(uv-(u+v)\bigr).
\end{equation*}
Since \(u,v\in[\frac{1-\delta'}{k},1]\), we can write
\begin{align*}
    \frac{\partial f}{\partial u}
&=(k+\delta'+2)(1-v)\geq 0\\
\frac{\partial f}{\partial v}
&=(k+\delta'+2)(1-u)\geq0.
\end{align*}

Hence, \(f(u,v)\) is nondecreasing in both \(u\) and \(v\), and therefore
\[
f(u,v)
\geq g(k,\delta')\triangleq
f\!\left(\frac{1-\delta'}{k},
\frac{1-\delta'}{k}\right).
\]
It remains to prove that $g(k,\delta')>0$.

We have
\begin{align*}
g(k,\delta')&=
(k+\delta')^3
-4(k+\delta')
+\frac{2(k+\delta'+2)(1-\delta')}{k}\nonumber\\
&\ \ \ -\frac{(k+\delta'+2)(1-\delta')^2}{k^2}\\
&= (k+\delta'+2)(1-\frac{1}{k^2})(k+\delta'-1)^2\\
&>0,
\end{align*}
since $k>1$.

Therefore, \(f(u,v)\geq g(k,\delta')>0,
\)
which completes the proof.

\bibliography{REF}
\bibliographystyle{IEEEtran}
 \end{document}